\documentclass[journal]{IEEEtran}

\hyphenation{op-tical net-works semi-conduc-tor}
\usepackage{dsfont}
\usepackage{epsfig,latexsym,amsmath,amssymb,cite,color,graphicx,algorithm,algorithmic,cases}
\usepackage{amsmath,amssymb,amsfonts,hyperref,cite}
\usepackage{amsthm}
\usepackage{algorithmic}
\usepackage{graphicx}
\usepackage{setspace}
\usepackage[percent]{overpic}
\usepackage{stmaryrd}
\usepackage{bbm}
\usepackage{bm}
\usepackage{amsmath}
\usepackage{graphicx}
\usepackage[utf8]{inputenc}
\usepackage[english]{babel}
\usepackage{todonotes}

\newcommand{\mathdash}{\relbar\mkern-12mu\relbar}
\newtheorem{theorem}{Theorem}
\newtheorem{corollary}{Corollary}
\newtheorem{lemma}{Lemma}{}
\newtheorem{definition}{Definition}
\newtheorem{proposition}{Proposition}
\newtheorem{example}{Example}
\newtheorem{remark}{Remark}
\usepackage{soul}

\DeclareMathOperator*{\argmax}{arg\,max}
\DeclareMathOperator*{\argmin}{arg\,min}
\allowdisplaybreaks
\sloppy
\hypersetup{
    colorlinks=true, 
    linkcolor= blue,
    citecolor=blue 
}

\title{Secret Sharing from Correlated Gaussian Random Variables and Public Communication}

\author{Vidhi Rana, R\'{e}mi A. Chou, Hyuck Kwon \thanks{The authors are with the Department of Electrical
Engineering and Computer Science, Wichita State University, Wichita, KS. R\'{e}mi A. Chou and Vidhi Rana were supported in part by NSF grant CCF-1850227 and CCF-2047913. Part of results has been presented at the 2020 IEEE Information Theory Workshop \cite{rana2021secret}. E-mails: vxrana@shockers.wichita.edu, remi.chou@wichita.edu, hyuck.kwon@wichita.edu. }
}
\begin{document}

\maketitle
\begin{abstract}
    In this paper, we study an {\color{black} information-theoretic secret sharing problem}, where a dealer distributes shares of a secret among a set of participants under the following constraints: (i) authorized sets of users can recover the secret by pooling  their shares, and (ii) non-authorized sets of colluding users cannot learn any information about the secret. We assume that the dealer and participants observe the realizations of correlated Gaussian random variables and that the dealer can communicate with participants through a one-way, authenticated, rate-limited, and public channel. Unlike traditional secret
sharing protocols, in our setting, no perfectly secure channel is needed between the dealer and the~participants. Our main result is a closed-form characterization of the fundamental trade-off between secret rate and public communication rate.  
\end{abstract}
\begin{IEEEkeywords}
Secret sharing, information-theoretic security, rate-limited communication, Gaussian sources
\end{IEEEkeywords}

\section{Introduction}
Secret sharing has been introduced in \cite{shamir}, \cite{blakley}. In basic secret-sharing models, a dealer distributes a secret among a set of participants, with the constraint that only pre-defined sets of participants can recover this secret by pooling  their shares, while any other set of colluding participants cannot learn any information about the secret.

{\color{black} In most secret-sharing models, including Shamir's scheme~\cite{shamir}, it is assumed that the dealer and each participant can communicate over an information-theoretically secure channel at no cost.  
 While complexity-based cryptography techniques, e.g., \cite{Diffie1976}, could be used to implement secure channels without any other resources than a public channel, it would not provide information-theoretically secure channels. In this paper, we are interested in  
\emph{another approach that aims at providing a full information-theoretic solution that would not rely on complexity-based cryptography}.} In other words, we want to avoid the assumption that  information-theoretically secure communication channels are available at no cost. An information-theoretic approach to secret sharing over wireless channels  has been introduced in \cite{zou} for this purpose. The main idea is to leverage channel noise by remarking that {\color{black} information-theoretic secret sharing} over wireless channels is similar to compound wiretap channel models \cite{liang09}. This information-theoretic approach has also been formulated for source models in~\cite{csiszar2,profchou,chou2021}, where participants and dealers share correlated random variables. These models are related to compound secret-key generation, e.g., \cite{Tavangaran,bloch10}, and biometric systems with a multiuser access structure \cite{chou2019biometric}, in that multiple reliability and security constraints need to be satisfied simultaneously.

In this paper, we consider the {\color{black}information-theoretic} secret sharing model in \cite{profchou} with Gaussian sources. Specifically, the dealer and the participants observe realizations of correlated Gaussian random variables, and the dealer can communicate with the participants over an authenticated, one-way, rate-limited, and public communication channel.  
In wireless networks, independently and identically distributed realizations of correlated random variables can, for instance, be obtained from channel gain measurements after appropriate manipulations~\cite{ye,pierrot}.
Our approach for the achievability part consists in handling the reliability and security requirements separately. Specifically,  reliability is obtained via a coding scheme akin to a compound version of Wyner-Ziv coding \cite{wyner}, and security relies on universal hashing via extractors~\cite{vadhan}. Interestingly, the converse shows that there is no loss of optimality in decoupling the reliability and security requirements. The achievability is first obtained for discrete random variables and then extended to continuous random variables via fine quantization. In principle, one cannot assume a specific quantization strategy to ensure the security requirement in an information-theoretic manner; hence, the key step in this extension is to show that information-theoretic security holds, provided that the quantization is sufficiently fine. For the converse part, we can partly rely on techniques developed in \cite{r8}, \cite{r3}. However, unlike in \cite{r8}, \cite{r3}, our setting involves multiple security constraints that need to be satisfied simultaneously; hence, the main task in the converse is to prove a saddle point property without any degradation assumption on the source model.

The main differences between our work and \cite{Tavangaran,bloch10,chou2019biometric,profchou} are that \cite{Tavangaran,bloch10,chou2019biometric,profchou} consider discrete memoryless sources, whereas we consider Gaussian sources. As described above, handling Gaussian random variables calls for different proof techniques and considerations. Additionally, unlike \cite{Tavangaran,bloch10,chou2019biometric,profchou}, it also allows us to derive capacity results without assuming any source degradation properties. We also highlight that unlike~\cite{bloch10,profchou}, we consider rate-limited  public communication, and unlike~\cite{chou2019biometric,profchou}, we handle arbitrary access structures.

The main features of our work can be summarized as follows: 
(i) Our model relies on correlated Gaussian random variables and, similar to \cite{profchou} but unlike traditional secret-sharing schemes \cite{shamir}, does not rely on the assumption that information-theoretically secure channels between the dealer and the participants are available. (ii) Similar to the model in~\cite{profchou} but unlike traditional secret-sharing models, we consider a model that requires information-theoretic security for the secret with respect to unauthorized sets of participants during the distribution phase, i.e., when the dealer distributes shares of the secret to participants. 
(iii)~We establish a closed-form expression that characterizes the optimal trade-off between secret rate and public communication rate.
(iv)~The size of the shares in our coding scheme scales linearly with the size of the secret for any access structure similar to the model in~\cite{profchou}. Indeed, a share comprises the public communication from the dealer and $n$ quantized realizations of a Gaussian random variable, which can be shown to both linearly scale with $n$. The size of the shares does depend on the specific access structure considered but not on the number of participants. Specifically, the public communication must ensure that the set of authorized users with the least amount of information about the secret is able to reconstruct the secret.  By contrast, the best-known traditional secret-sharing schemes may require a share size that grows exponentially with the number of the participants for some access structures \cite{beimel} --  note, however, that it is unknown whether or not there exist traditional secret-sharing schemes that require a smaller share size. (v)~For threshold access structures, i.e., when a fixed number of participants $t$ is needed to reconstruct the secret (independently from the specific identities of those participants), we establish that the size of the secret that can be exchanged is, in general, \emph{not} a monotonic function of the threshold $t$.

The remainder of the paper is organized as follows. We set the notation in Section \ref{sec:notation} and  formally introduce the problem statement in Section \ref{secps}.  We present our main results in Section~\ref{secmr}, and  proofs in Sections~\ref{sec:conv} and \ref{sec:ach}. Finally, we provide concluding remarks in Section~\ref{sec:cm}. 

\section{Notation} \label{sec:notation}
For any $a,b\in \mathbb{R}$, define $\llbracket a,b\rrbracket \triangleq [ \lfloor a \rfloor, \lceil b \rceil]\cap \mathbb{N}$. For $x \in \mathbb{R}$, define $[x]^+\triangleq\max(0,x)$. For a set $\mathcal{S}$, let $2^{\mathcal{S}}$ denote the power set of $\mathcal{S}$. All logarithms are taken in base 2 throughout the paper. Let $I_m$ denote the identity matrix of  dimension $m \in \mathbb{N}$. Let $\det(W)$ denote the determinant of a matrix $W$ and $\vert \mathcal{S} \vert$ denote the cardinality of a set $\mathcal{S}$. For two random variables $X$ and $V$, $\sigma^2_X$ and $\sigma_{XV}$ denote $\mathbb{E}[(X-\mathbb{E}[X])^2]$ and $\mathbb{E}[(X-\mathbb{E}[X])(V-\mathbb{E}[V])]$, respectively. $N \sim \mathcal{N}(0,\Sigma)$ indicates that $N$ is a zero-mean Gaussian random vector with covariance matrix $\Sigma$. The indicator function is denoted by $\mathds{1}\{ \omega \}$, which is equal to $1$ if the predicate $\omega$ is true and $0$ otherwise. Let $H(X)$ (respectively, $h(X)$) denote the Shannon entropy (respectively, the differential entropy) of a discrete (respectively  continuous), random variable $X$. Also, let $I(X;Y)$ denote the mutual information between $X$ and $Y$, which are either continuous or discrete random variables. 
 
\section{Problem Statement} \label{secps}
Consider a dealer and $L$ participants. Define $\mathcal{L}\triangleq \llbracket 1, L \rrbracket$,   $\mathcal{X}\triangleq \mathbb{R}$, and $\mathcal{Y}\triangleq \mathbb{R}$. Consider a Gaussian memoryless source model $(\mathcal{X}\times \mathcal{Y}_{\mathcal{L}}, p_{XY_{\mathcal{L}}})$, where $Y_{\mathcal{L}}\triangleq (Y_l)_{l \in {\mathcal{L}}}$, and $(X, Y_\mathcal{L})$ are jointly Gaussian random variables with a non-singular covariance matrix. Let $\mathbb{A}$ be a set of subsets of $\mathcal{L}$ such that for any $\mathcal{T}\subseteq{\mathcal{L}}$, if $\mathcal{T}$ contains a set that belongs to $\mathbb{A}$, then $\mathcal{T}$ also belongs to $\mathbb{A}$, i.e., $\mathbb{A}$ is a monotone access structure~\cite{benaloh1988generalized}. We also define $\mathbb{U}\triangleq2^{\mathcal{L}}\backslash\mathbb{A}$ as the set of all colluding subsets of users who must not learn any information about the secret. In the following, for any $\mathcal{A}\in \mathbb{A}$ and for any $\mathcal{U}\in \mathbb{U}$, we use the notation  $Y_{\mathcal{A}}^n\triangleq (Y^n_l)_{l\in \mathcal{A}}$ and $Y_{\mathcal{U}}^n\triangleq (Y^n_l)_{l\in \mathcal{U}}$. Moreover, we assume that the dealer can communicate with the participants over an authenticated, one-way, rate-limited, noiseless, and public communication channel.  
\begin{definition}
A $(2^{nR_s}, R_p, \mathbb{A}, n)$ secret-sharing strategy is defined as follows:
\begin{itemize}
    \item The dealer observes $X^n$ and Participant $l \in \mathcal{L}$ observes~$Y^n_l$.
    \item The dealer sends over the public channel the message $M$ to the participants with the bandwidth constraint  $H(M)\leq n R_p$.
    \item The dealer computes a secret $S \in \mathcal{S} \triangleq \llbracket 1 , 2^{nR_s} \rrbracket$ from~$X^n$.
    \item Any subset of participants $\mathcal{A} \in \mathbb{A}$ can compute an estimate $\widehat{S}(\mathcal{A})$ of $S$ from their observations $(Y^n_l)_{l \in \mathcal{A}}$ and~$M$.
\end{itemize}
\end{definition}
\begin{definition}
A rate pair $(R_p,R_s)$ is achievable if there exists a sequence of $(2^{nR_s},R_p,\mathbb{A},n)$ secret-sharing strategies such~that
\begin{align}
\lim_{n \rightarrow{\infty}}\max_{\mathcal{A} \in \mathbb{A}}  \mathbb{P}[\widehat{S}(\mathcal{A})\neq S]&=0,\label{oct19_29_1}\\
\lim_{n \rightarrow{\infty}}\max_{\mathcal{U} \in \mathbb{U}} I(S;M,Y^n_\mathcal{U})&=0,\label{oct19_29_2}\\
\lim_{n \rightarrow{\infty}}\log\vert\mathcal{S}\vert-H(S)&=0\label{oct19_29_3}. 
\end{align}
\end{definition}
(\ref{oct19_29_1}) means that any subset of participants in
$\mathbb{A}$ is able to recover the secret,  (\ref{oct19_29_2}) means that any subset of participants in $\mathbb{U}$ cannot obtain information about the secret, while  (\ref{oct19_29_3}) means that the secret is nearly uniform and that its entropy is nearly equal to its length.
\begin{remark}
The uniformity condition (\ref{oct19_29_3}) ensures that a secret-sharing strategy that maximizes the length of the secret, will also maximize the entropy of the secret. Without this condition, maximizing the length of the secret would not be meaningful as one could always increase the length of the secret by adding redundancy to it. This is the same reason why in secret-key generation, one requires uniformity of the secret key \cite{maurer1993,ahls1993}.
\end{remark}
The secret capacity region is defined as $$ \mathcal{R}(p_{XY_{\mathcal{L}}}, \mathbb{A})\triangleq\{(R_p,R_s): (R_p, R_s)~\text{is achievable}\}.$$ Moreover, for a fixed $R_p$, the supremum of secret rates $R_s$ such that $(R_p,R_s) \in \mathcal{R}(p_{XY_{\mathcal{L}}}, \mathbb{A})$ is called the secret capacity and is denoted by $C_s(\mathbb{A}, R_p)$.

Additionally, one can write for any $\mathcal{A}\in \mathbb{A}$ and for any $\mathcal{U}\in \mathbb{U}$ (see Appendix \ref{App_th1} for the derivation)
\begin{align}
Y_{\mathcal{A}}&=H_{\mathcal{A}}X+W_{Y_{\mathcal{A}}},\label{aug_291}\\
Y_{\mathcal{U}}&=H_{\mathcal{U}}X+W_{Y_{\mathcal{U}}},\label{aug_292}
\end{align}
where $H_{\mathcal{A}} \in \mathbb{R}^{{\vert \mathcal{A}\vert}\times 1}$, $H_{\mathcal{U}}\in \mathbb{R}^{\vert{\mathcal{U}}\vert\times 1}$, $W_{Y_{\mathcal{A}}}\sim\mathcal{N}(0,I_{\vert\mathcal{A} \vert})$, and  $W_{Y_{\mathcal{U}}}\sim\mathcal{N}(0,I_{\vert \mathcal{U}\vert})$.
\section{Main Results}\label{secmr}
\subsection{Results for general access structures}
For a given access structure $\mathbb{A}$, define  $$\mathcal{A}^\star\in\arg\min_{\mathcal{A} \in \mathbb{A}}  H_\mathcal{A}^TH_\mathcal{A} , \quad \mathcal{U}^\star\in\arg\max_{\mathcal{U} \in \mathbb{U}}  H_\mathcal{U}^TH_\mathcal{U} .$$
\begin{theorem} \label{theo1}
For any access structure $\mathbb{A}$ and public communication rate $R_p\geq 0$, the secret capacity $C_s(\mathbb{A}, R_p)$ is
\begin{align*}
&C_s(\mathbb{A}, R_p)=\\ &\Bigg[\frac{1}{2}\log\frac{\sigma^2_XH^T_{\mathcal{U}^\star} H_{\mathcal{U}^\star} 2^{-2R_p}+\sigma^2_X H_{\mathcal{A}^\star}^T H_{\mathcal{A}^\star}  (1-2^{-2R_p})+1}{\sigma^2_XH^T_{\mathcal{U}^\star} H_{\mathcal{U}^\star}\!+1}\Bigg]^+\!\!\!.
\end{align*}
\end{theorem}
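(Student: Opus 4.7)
My plan is to prove Theorem \ref{theo1} by establishing matching achievability and converse bounds, following the decoupled reliability-and-security strategy outlined in the introduction.

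For achievability, I would introduce an auxiliary random variable $V$ satisfying the Markov chain $V - X - Y_\mathcal{L}$ and use a compound source-coding template. The dealer generates a Wyner-Ziv codebook for $V^n$ and transmits a bin index $M$ of rate $R_p = I(V;X) - I(V;Y_{\mathcal{A}^\star})$, so that every authorized set $\mathcal{A}$ (whose observation is at least as informative as $\mathcal{A}^\star$'s) can reliably decode $V^n$ from $(M, Y_\mathcal{A}^n)$. A universal hash applied to $V^n$ then extracts a secret $S$ of rate $I(V;Y_{\mathcal{A}^\star}) - I(V;Y_{\mathcal{U}^\star})$, with $I(S; M, Y_\mathcal{U}^n)$ vanishing uniformly over $\mathcal{U} \in \mathbb{U}$ by the leftover hash lemma. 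To extend from discrete to Gaussian, I would quantize $X$ with step size tending to zero and invoke the discrete inner bound; the subtle point, flagged in the introduction, is that the security guarantee must be proved with respect to the original continuous $Y_\mathcal{U}$ rather than a quantized proxy, which requires a careful continuity argument valid for sufficiently fine quantization. To evaluate the inner bound for Gaussian sources, I would pick $V = X + N$ with $N \sim \mathcal{N}(0, \sigma^2_N)$ independent of $(X, Y_\mathcal{L})$; applying the Woodbury identity to (\ref{aug_291})--(\ref{aug_292}) collapses all vector mutual informations to scalar expressions in $H_\mathcal{A}^T H_\mathcal{A}$ and $H_\mathcal{U}^T H_\mathcal{U}$. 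Choosing $\sigma^2_N$ so that the public-rate constraint is tight and simplifying algebraically recovers the closed-form expression in the theorem.

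For the converse, I would follow the methodology of \cite{r3,r8}: combine Fano's inequality applied to (\ref{oct19_29_1}), the secrecy condition (\ref{oct19_29_2}), and the uniformity condition (\ref{oct19_29_3}) to single-letterize, introducing a time-sharing variable and an auxiliary $V$ with $V - X - Y_\mathcal{L}$. The decoding constraint at the worst-case authorized set yields $R_p \geq I(V;X) - I(V;Y_{\mathcal{A}^\star})$, and the combination of the reliability and secrecy constraints yields $R_s \leq I(V;Y_{\mathcal{A}^\star}) - I(V;Y_{\mathcal{U}^\star})$ for the same $V$. A Gaussian maximum-entropy argument, exploiting that each $I(V;Y_\mathcal{A})$ depends on the joint law of $(V,X)$ only through the conditional variance of $X$ given $V$, then shows that restricting to $V$ jointly Gaussian with $X$ is without loss of optimality, matching the achievability evaluation.

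The main obstacle is establishing the saddle-point property $\max_{\mathcal{U} \in \mathbb{U}} \min_{\mathcal{A} \in \mathbb{A}} [I(V;Y_\mathcal{A}) - I(V;Y_\mathcal{U})] = \min_{\mathcal{A} \in \mathbb{A}} \max_{\mathcal{U} \in \mathbb{U}} [I(V;Y_\mathcal{A}) - I(V;Y_\mathcal{U})]$ without any source-degradation assumption, so that both achievability and converse collapse to the common pair $(\mathcal{A}^\star, \mathcal{U}^\star)$. For an arbitrary auxiliary $V$ the exchange of $\min$ and $\max$ is non-trivial, but for the Gaussian test channel $V = X + N$ each $I(V;Y_\mathcal{A})$ is a monotone function of the scalar $H_\mathcal{A}^T H_\mathcal{A}$, so the inner and outer optimizations decouple and are jointly attained at $(\mathcal{A}^\star, \mathcal{U}^\star)$ regardless of $\sigma^2_N$. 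Reducing the converse to Gaussian test channels while preserving this scalarization is therefore the crux of the proof, and the step I expect to demand the most care.
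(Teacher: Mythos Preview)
Your overall plan is sound and aligns with the paper's two-part structure (compound Wyner--Ziv plus hashing for achievability; single-letter Gaussian bound plus minimax for the converse), including the quantization subtlety for security against the \emph{unquantized} $Y_\mathcal{U}^n$. Two points deserve correction or sharpening.

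First, your stated ``main obstacle'' is mis-identified. The exchange $\max_{\mathcal{U}}\min_{\mathcal{A}}[I(V;Y_\mathcal{A})-I(V;Y_\mathcal{U})]=\min_{\mathcal{A}}\max_{\mathcal{U}}[\cdots]$ is trivially true for any fixed $V$, because the objective is separable in $\mathcal{A}$ and $\mathcal{U}$; both sides equal $\min_{\mathcal{A}} I(V;Y_\mathcal{A})-\max_{\mathcal{U}} I(V;Y_\mathcal{U})$. The non-trivial saddle point in the paper (Lemma~\ref{lemconv2}) is between the auxiliary parameter $\sigma^2_{X|V}$ and the set $\mathcal{A}$: the converse, obtained by invoking \cite[Theorem~2]{r8} separately for each pair $(\mathcal{A},\mathcal{U})$, yields $C_s\le \min_{\mathcal{A},\mathcal{U}}\max_{\sigma^2_{X|V}:\,I_p(\sigma^2_{X|V},\mathcal{A})\le R_p} I_s(\sigma^2_{X|V},\mathcal{A},\mathcal{U})$, where the feasible region of the inner maximization depends on $\mathcal{A}$ through the public-rate constraint. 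Achievability, by contrast, picks a single $\sigma^2_{X|V}$ first, giving $\max_{\sigma^2_{X|V}}\min_{\mathcal{A},\mathcal{U}}I_s$ subject to $I_p(\sigma^2_{X|V},\mathcal{A}^\star)\le R_p$. Showing these coincide is the real work, and it hinges on the monotonicity of the optimal $\sigma^{2\star}_{X|V}(\mathcal{A},\mathcal{U})$ in $O_\mathcal{A}$ (equation~\eqref{eqoptsigm}), not on swapping $\mathcal{A}$ and $\mathcal{U}$.

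Second, your converse route differs from the paper's. The paper does \emph{not} single-letterize from scratch; it applies the already-Gaussian bound of \cite{r8} to each pair $(\mathcal{A},\mathcal{U})$, so the ``Gaussian max-entropy'' step is borrowed wholesale and the remaining task is purely the minimax over finite sets just described. Your proposed route---a direct single-letterization yielding one common auxiliary $V$ for all pairs, followed by a conditional-variance/EPI argument---would, if carried out, \emph{bypass} the minimax lemma entirely (since a single $V$ already gives $R_s\le \min_{\mathcal{A},\mathcal{U}}I_s$ and $R_p\ge \max_\mathcal{A} I_p$ simultaneously). That is a legitimate alternative, but then the delicate step is indeed the Gaussian optimality of $V$ under the Markov constraint $V-X-Y_\mathcal{L}$, which you flag only at the end; that, not the $\mathcal{A}/\mathcal{U}$ exchange, is where the care is required. (Minor: the determinant reduction you need is the Weinstein--Aronszajn/Sylvester identity $\det(I+AB)=\det(I+BA)$, not Woodbury.)
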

\begin{proof}
The converse and achievability are proved in Sections~\ref{sec:conv} and \ref{sec:ach}, respectively. \end{proof}
From Theorem \ref{theo1}, we obtain the following corollary when the public communication is rate-unlimited.
\begin{corollary} \label{theo2}
For any access structure $\mathbb{A}$, and an unlimited public communication rate, the secret capacity is given by
\begin{align*}
C_s(\mathbb{A}, R_p = +\infty ) &\triangleq \lim_{ R_p \to +\infty}C_s(\mathbb{A}, R_p )\\&=\Bigg[\frac{1}{2}\log\frac{\sigma^2_{X}H_{\mathcal{A}^\star}^TH_{\mathcal{A}^\star}+1}{\sigma^2_{X}H^T_{\mathcal{U}^\star}H_{\mathcal{U}^\star}+1}\Bigg]^+.
\end{align*}
\end{corollary}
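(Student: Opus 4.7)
The plan is to obtain Corollary~\ref{theo2} as a direct specialization of Theorem~\ref{theo1} in the limit $R_p \to +\infty$. Because the theorem already provides a closed-form capacity expression, no new achievability or converse argument is needed; the task reduces to a one-line limit calculation together with a continuity argument.

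First, I would note that for a fixed access structure $\mathbb{A}$, the optimizers $\mathcal{A}^\star$ and $\mathcal{U}^\star$ do not depend on $R_p$; consequently, the scalars $\sigma_X^2$, $H_{\mathcal{A}^\star}^T H_{\mathcal{A}^\star}$, and $H_{\mathcal{U}^\star}^T H_{\mathcal{U}^\star}$ are all independent of $R_p$ as well. The only $R_p$-dependence in the formula of Theorem~\ref{theo1} enters through the factor $2^{-2R_p}$, which converges to $0$ as $R_p \to +\infty$.

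Second, substituting $2^{-2R_p} \to 0$ in the numerator inside the logarithm simplifies
\begin{align*}
&\sigma^2_X H^T_{\mathcal{U}^\star} H_{\mathcal{U}^\star}\, 2^{-2R_p}+\sigma^2_X H_{\mathcal{A}^\star}^T H_{\mathcal{A}^\star}\,(1-2^{-2R_p})+1 \\
&\qquad\longrightarrow \sigma^2_X H_{\mathcal{A}^\star}^T H_{\mathcal{A}^\star} + 1,
\end{align*}
while the denominator $\sigma^2_X H^T_{\mathcal{U}^\star} H_{\mathcal{U}^\star} + 1$ is already independent of $R_p$.

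Finally, since $\log(\cdot)$ on its positive argument and the clipping operation $[\cdot]^+$ are both continuous, the limit commutes with them, and taking $R_p \to +\infty$ inside the formula of Theorem~\ref{theo1} yields exactly the expression in the statement. There is no substantive obstacle here; the only point worth a brief verification is the continuity of $[\cdot]^+$, so that the pointwise limit can be passed through the outer bracket, which is immediate.
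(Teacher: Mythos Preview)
Your proposal is correct and matches the paper's approach: the paper simply states that the corollary follows from Theorem~\ref{theo1} by letting $R_p \to +\infty$, without any further argument. Your explicit verification that $2^{-2R_p}\to 0$, that the optimizers $\mathcal{A}^\star,\mathcal{U}^\star$ are independent of $R_p$, and that continuity of $\log$ and $[\cdot]^+$ allows passing the limit inside, is precisely the routine computation the paper leaves implicit.
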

{\color{black}

Note that in Theorem \ref{theo1} and Corollary \ref{theo2}, the length of the public communication scales linearly with the length of the secret by construction and corresponds to a compressed version of the $n$ source observations of the dealer via a compound version of Wyner-Ziv coding. Hence, the size of the share of each participant, which
comprises the public communication and $n$ quantized observations of a Gaussian random
variable, scales linearly with the length of the secret -- as explained in the proof of Theorem \ref{theo1}, the number of bits needed to store quantized realizations of Gaussian random variables is negligible compared to the number of source observations $n$ in our achievability scheme.  
Note that, unlike traditional secret-sharing models, which separately consider the share-creation phase and the share-distribution phase, we allow  a joint design of these two phases in our setting. This is made possible by considering   correlated random variables (at the participants and the dealer) and public communication  instead of information-theoretically secure channels as in traditional secret-sharing models.} 
The following example illustrates Theorem \ref{theo1} and Corollary~\ref{theo2}.
\begin{example} \label{ex1}
Consider a dealer and three participants who observe independently and identically distributed realizations of correlated Gaussian random variables as depicted in Figure~\ref{fig1}.  Define the access structure $\mathbb{A}\triangleq\{\{1,2\}, \{2,3\}, \{1,2,3\}\}$ and define  $\mathbb{U}\triangleq\{ \{1,3\}, \{1\}, \{2\}, \{3\}\}$ such that (i) the sets of participants in $\mathbb{A}$ can recover the secret using their observations and the public message $M$, and (ii) the sets of participants in $\mathbb{U}$ cannot learn information about the secret.
For $s \in \llbracket 1 ,L \rrbracket$, let $H_{\mathcal{L}}(s)$ denote the $s$-th component of $H_{\mathcal{L}}$, and assume that $\sigma^2_X\triangleq2$, $H_\mathcal{L}\triangleq[0.5, 1, 0.8]^T$, and for any $\mathcal{S} \subseteq \mathcal{L}$, $H_\mathcal{S} = (H_\mathcal{L}(s))_{s \in \mathcal{S}}$. Then, one can compute the secret capacity using Theorem~\ref{theo1} and Corollary \ref{theo2}, as shown in Figure~\ref{fig2}.
\end{example}
\begin{figure}[ht]
\centering
\includegraphics[width=0.49\textwidth]{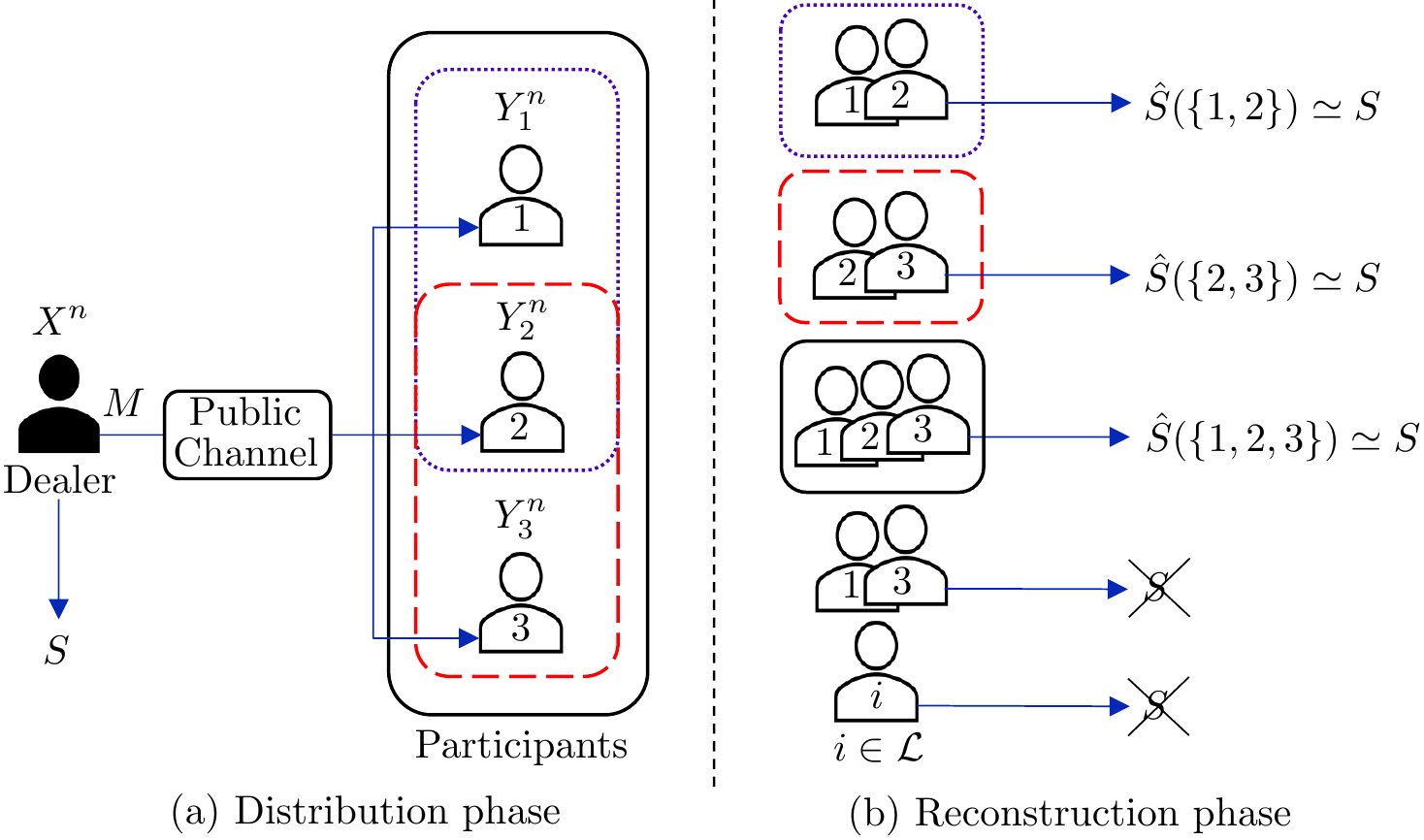}
\caption{Secret-sharing setting when $\mathbb{A}=\{\{1,2\}, \{2,3\}, \{1,2,3\}\}$ and $\mathbb{U}=\{ \{1,3\}, \{1\}, \{2\}, \{3\}\}$. Dashed, dotted, and solid contour lines represent the subsets of participants that are authorized to reconstruct the~secret.} \label{fig1}
\end{figure}

\begin{figure}
\centering
\begin{overpic}[trim=2.8cm 8.5cm 2.5cm 10cm,clip,width=.53\textwidth,tics=10]{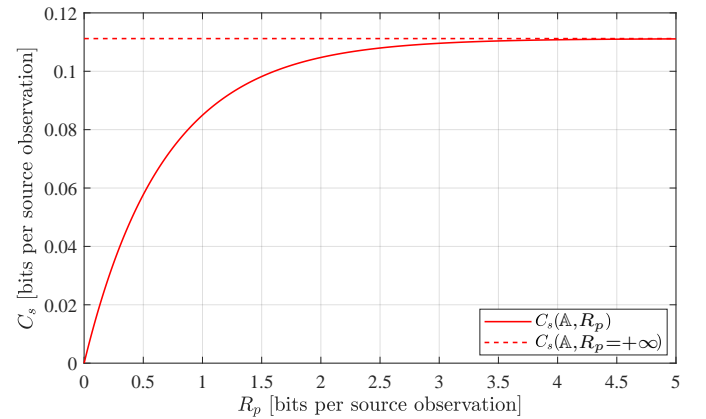}
\put (75.5,9.8) {${\scriptstyle\mathbb{A}, \scriptstyle R_p=+\infty}$}
\put (75.5,12.8) {$\scriptstyle\mathbb{A},\scriptstyle R_p$}
\end{overpic}
\caption{Secret capacity for Example \ref{ex1}.} \label{fig2}
\end{figure}
\subsection{Results for threshold access structures}

We now consider a special kind of access structure called threshold access structure \cite{shamir}.   
A threshold access structure with threshold $ t \in \llbracket 1, L \rrbracket$ is defined as $$\mathbb{A}_t\triangleq\{\mathcal{A}\subseteq \mathcal{L}:\vert \mathcal{A}\vert\geq t\}.$$ The complement of $\mathbb{A}_t$ is defined as  $\mathbb{U}_t\triangleq2^{\mathcal{L}}\backslash\mathbb{A}_t=\{\mathcal{A}\subseteq \mathcal{L}:\vert \mathcal{A}\vert < t\}.$ In other words, the threshold access structure is defined such that any set of $t$  participants can reconstruct the secret, but no set of fewer than $t$ participants can learn information about the secret.

The following result provides necessary and sufficient conditions to determine whether the secret capacity increases or decreases as the threshold $t$ increases. 

 \begin{theorem}\label{EX3}
 Assume that for any $\mathcal{S} \subseteq \mathcal{L}$, $H_\mathcal{S} = (H_\mathcal{L}(s))_{s \in \mathcal{S}}$. For any $t \in \llbracket 1 , L \rrbracket$, consider $\mathcal{A}_t^\star \in \argmin_{\mathcal{A}\in \mathbb{A}_{t}}H_\mathcal{A}^TH_\mathcal{A}$, and  $\mathcal{U}_t^\star \in \argmax_{\mathcal{U}\in \mathbb{U}_{t}} H_\mathcal{U}^TH_\mathcal{U}$. For any communication rate $R_p\geq 0$, for any $t \in \llbracket 1 , L \rrbracket$, we have
  \begin{align*}
     &C_s(\mathbb{A}_{1},R_p)\geq C_s(\mathbb{A}_{t},R_p),
     \end{align*}
and for any  $t \in \llbracket 1 , L \rrbracket$ and $i \in \llbracket 1 , L-t \rrbracket$,
  \begin{align*}
      &C_s(\mathbb{A}_{t},R_p) \geq C_s(\mathbb{A}_{t+i},R_p) \iff\\
      & \frac{ H^T_{\mathcal{U}^\star_{t+i}} H_{\mathcal{U}^\star_{t+i}} -  H^T_{\mathcal{U}^\star_{t}}H_{\mathcal{U}^\star_{t}} }{ H^T_{\mathcal{A}^\star_{t+i}}H_{\mathcal{A}^\star_{t+i}} -  H^T_{\mathcal{A}^\star_{t}}H_{\mathcal{A}^\star_{t}} }\geq  \frac{ 1+ \sigma^2_X  H^T_{\mathcal{U}_t^\star}H_{\mathcal{U}_t^\star}}{1+ \sigma^2_X  H^T_{\mathcal{A}_t^\star}H_{\mathcal{A}_t^\star}}.
  \end{align*}
  \end{theorem}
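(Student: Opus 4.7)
The plan is to reduce Theorem \ref{EX3} to purely algebraic comparisons via the closed-form in Theorem \ref{theo1}. Writing $a_t\triangleq H^T_{\mathcal{A}_t^\star}H_{\mathcal{A}_t^\star}$ and $u_t\triangleq H^T_{\mathcal{U}_t^\star}H_{\mathcal{U}_t^\star}$, the key preliminary observation is the identity
\begin{align*}
\frac{\sigma^2_X u_t\, 2^{-2R_p}+\sigma^2_X a_t(1-2^{-2R_p})+1}{\sigma^2_X u_t +1}=1+\frac{\sigma^2_X(1-2^{-2R_p})(a_t-u_t)}{1+\sigma^2_X u_t},
\end{align*}
which factors the $R_p$-dependence into the single non-negative multiplier $\sigma^2_X(1-2^{-2R_p})$ and therefore removes $R_p$ from every subsequent comparison. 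Under the hypothesis $H_{\mathcal{S}}=(H_{\mathcal{L}}(s))_{s\in\mathcal{S}}$, I would sort the squared entries of $H_{\mathcal{L}}$ as $h_{(1)}^2\leq\cdots\leq h_{(L)}^2$, from which $a_t=\sum_{j=1}^{t}h_{(j)}^2$ (the $t$ smallest) and, for $t\geq 2$, $u_t=\sum_{j=L-t+2}^{L}h_{(j)}^2$ (the $t-1$ largest), with $u_1=0$.

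For the first inequality, I can drop the $[\cdot]^+$ whenever $C_s(\mathbb{A}_t,R_p)>0$ since $C_s(\mathbb{A}_1,R_p)\geq 0$ always. Monotonicity of $\log$ and cancellation of the common factor then reduce the claim to the algebraic inequality $h_{(1)}^2(1+\sigma^2_X u_t)\geq a_t-u_t$. Using $a_t-h_{(1)}^2=\sum_{j=2}^{t}h_{(j)}^2$, it suffices to prove the termwise bound $\sum_{j=L-t+2}^{L}h_{(j)}^2\geq \sum_{j=2}^{t}h_{(j)}^2$, which follows from $h_{(L-t+j)}^2\geq h_{(j)}^2$ for $j\in\llbracket 2,t\rrbracket$ (valid because $L-t+j\geq j$ whenever $t\leq L$); the non-negative slack $\sigma^2_X h_{(1)}^2 u_t$ absorbs any remaining gap.

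For the second statement, in the regime where both capacities are positive the $[\cdot]^+$ can again be dropped, and the simplification above gives the $R_p$-free equivalence
\begin{align*}
C_s(\mathbb{A}_t,R_p)\geq C_s(\mathbb{A}_{t+i},R_p)\iff \frac{a_t-u_t}{1+\sigma^2_X u_t}\geq \frac{a_{t+i}-u_{t+i}}{1+\sigma^2_X u_{t+i}}.
\end{align*}
Clearing the positive denominators, cancelling the common $\sigma^2_X a_t u_t$ contribution, and regrouping yields the symmetric form $(u_{t+i}-u_t)(1+\sigma^2_X a_t)\geq (a_{t+i}-a_t)(1+\sigma^2_X u_t)$. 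Since $a_{t+i}>a_t$ (strict because at least one further positive squared entry is added) and $1+\sigma^2_X a_t>0$, dividing both sides by $(a_{t+i}-a_t)(1+\sigma^2_X a_t)$ produces exactly the stated ratio condition.

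The main obstacle I anticipate is not the algebra itself but the careful bookkeeping of the $[\cdot]^+$ truncation and of the degenerate sub-cases (e.g.\ $a_t\leq u_t$, $a_{t+i}\leq u_{t+i}$, $h_{(1)}=0$, or $R_p=0$), where the equivalences above may become one-directional or vacuous. I would dispatch these by signing $a_t-u_t$ and $a_{t+i}-u_{t+i}$ directly and verifying that the ratio condition and the capacity inequality remain consistent in each sub-case, so that the biconditional is preserved wherever it is meaningful. Once this bookkeeping is handled, the remaining content of the proof is only a few lines of algebra, with Theorem \ref{theo1} doing all of the heavy information-theoretic lifting.
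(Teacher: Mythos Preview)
Your proposal is correct and follows essentially the same route as the paper's proof in Appendix~\ref{App_C}: both invoke Theorem~\ref{theo1}, sort the squared entries of $H_{\mathcal{L}}$ to build nested optimizers $\mathcal{A}_t^\star,\mathcal{U}_t^\star$, and then cross-multiply the closed-form capacities to obtain the algebraic condition $(u_{t+i}-u_t)(1+\sigma^2_X a_t)\geq (a_{t+i}-a_t)(1+\sigma^2_X u_t)$; your termwise bound $u_t\geq a_t-a_1$ is exactly the paper's observation that $O\triangleq a_t-a_1\leq u_t$. Your factoring identity making the $R_p$-independence explicit is a clean presentational touch, and your flagging of the $[\cdot]^+$ and $a_{t+i}=a_t$ degeneracies is actually more careful than the paper, which silently works with the unclipped log-arguments and only derives $a_{t+i}-a_t\geq 0$ (not strict) before dividing by it.
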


  \begin{proof}
See Appendix \ref{App_C}.
  \end{proof}
Theorem \ref{EX3} illustrates the fact that the secret capacity is not necessarily a monotonic decreasing function of the threshold~$t$.
 
  \begin{example}
Consider a dealer and five participants.  Assume that $\sigma^2_X\triangleq2$, $H_\mathcal{L}\triangleq[1, 0.85, 0.9, 0.95, 0.75]^T$,  and for any $\mathcal{S} \subseteq \mathcal{L}$, $H_\mathcal{S} = (H_\mathcal{L}(s))_{s \in \mathcal{S}}$. Then, one can compare the secret capacities for different thresholds using Theorem \ref{EX3}, as shown in Figure~\ref{fig3}. 
\begin{figure}[h]
\centering
\includegraphics[trim=2.5cm 8.5cm 3cm 9cm,clip,width=0.50\textwidth]{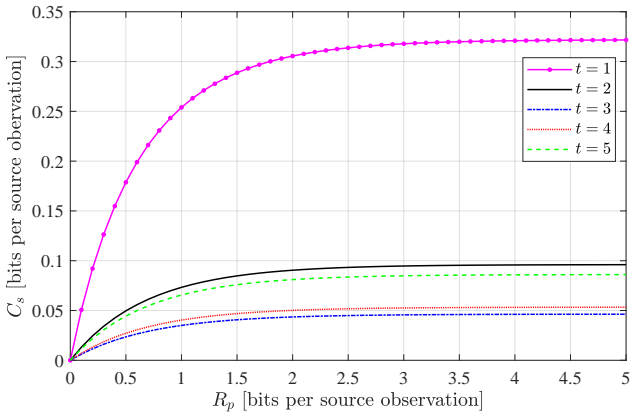}
\caption{Secret capacity for threshold access structure.} \label{fig3}
\end{figure}

From the definition of $\mathcal{A}_{t}^\star$ and $\mathcal{U}_{t}^\star$, we have 
$H_{\mathcal{A}^\star_1}=[0.75]^T$, 
$H_{\mathcal{A}^\star_2}=[0.75, 0.85]^T$,  $H_{\mathcal{A}^\star_3}=[0.75,0.85,0.9]^T$,
$H_{\mathcal{A}^\star_4}=[0.75,0.85,0.9,0.95]^T$,
$H_{\mathcal{A}^\star_5}=[0.75,0.85,0.9,0.95,1]^T$,
$H_{\mathcal{U}^\star_2}=[1]^T$,
$H_{\mathcal{U}^\star_3}=[1,0.95]^T$,  $H_{\mathcal{U}^\star_4}=[1,0.95,0.9]^T$,  and $H_{\mathcal{U}^\star_5}=[1,0.95,0.9,0.85]^T$.

For example,  
 putting $ H^T_{\mathcal{A}^\star_4}H_{\mathcal{A}^\star_4}=2.9975$,
$ H^T_{\mathcal{U}^\star_4}H_{\mathcal{U}^\star_4}=2.7125$,
$ H^T_{\mathcal{A}^\star_5} H_{\mathcal{A}^\star_5}=3.9975$, and
$ H^T_{\mathcal{U}^\star_5}H_{\mathcal{U}^\star_5}=3.4350$ in Theorem~\ref{EX3} with $t=4$ and $i=1$, we get 
 $C_s(\mathbb{A}_{4}, R_p)\leq C_s(\mathbb{A}_{5}, R_p)$ for any $R_p \geq 0$.
\end{example}

\section{Converse Proof of Theorem \ref{theo1}} \label{sec:conv}
To prove the converse, we first derive an upper bound on the secret capacity $C_s(\mathbb{A}, R_p)$ by considering a worst-case scenario in terms of a secret-key generation problem. This upper bound takes the form of a minimax optimization problem. We then derive a closed-form expression of this upper bound by proving a minimax theorem.

Define for $\mathcal{A} \in \mathbb{A}$, $\mathcal{U} \in \mathbb{U}$, $O_\mathcal{A}\triangleq H_\mathcal{A}^TH_\mathcal{A}$, and $O_\mathcal{U} \triangleq H_\mathcal{U}^TH_\mathcal{U}$.  Consider $V$ an auxiliary random variable jointly Gaussian with $X$, and let  $\sigma^2_{X\vert V}$ be the conditional variance of $X$ given $V$. Consider also $\mathcal{A}^\star\in\arg\min_{\mathcal{A} \in \mathbb{A}} O_\mathcal{A} $ and   $\mathcal{U}^\star\in\arg\max_{\mathcal{U} \in \mathbb{U}} O_\mathcal{U}$. Provided that $\sigma^2_{X\vert V} \neq 0$, for $\mathcal{A} \in \mathbb{A}$, $\mathcal{U} \in \mathbb{U}$, define 
\begin{align*}
    I_p (\sigma^2_{X \vert V}, \mathcal{A}) &\triangleq \frac{1}{2}\log\frac{\sigma^2_X}{\sigma^2_{X \vert V}}-\frac{1}{2}\log\frac{\sigma^2_XO_{\mathcal{A}}+1}{\sigma^2_{X \vert V}O_{\mathcal{A}}+1},\\
    I_s (\sigma^2_{X \vert V}, \mathcal{A}, \mathcal{U}) &\triangleq  \frac{1}{2}\log\frac{ \sigma^2_X O_\mathcal{A}+1}{\sigma^2_{X \vert V}O_\mathcal{A}+1}-\frac{1}{2}\log\frac{ \sigma^2_X O_\mathcal{U}+1}{\sigma^2_{X \vert V}O_\mathcal{U}+1}.
    \end{align*}
We will also use the following lemmas.
\begin{lemma}[Weinstein–Aronszajn identity, e.g., {\cite[Appendix~B]{sly}}] \label{Laug27_1}
 For any $\sigma^2 \in \mathbb{R}^+$ and  $A \in \mathbb{R}^{q\times 1}$, we have 
 \begin{align*}
 \det( A\sigma^2 A^T+I_q )=  A^T A\sigma^2+1.
 \end{align*}
 \end{lemma}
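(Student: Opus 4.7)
My plan is to exploit the fact that the matrix $A \sigma^2 A^T$ is a symmetric rank-one perturbation of zero, whose spectrum is fully determined by $A$ alone. First I would dispose of the degenerate case $A = 0$, where the left-hand side becomes $\det(I_q) = 1$ and the right-hand side becomes $0 \cdot \sigma^2 + 1 = 1$, so equality holds trivially.

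For $A \neq 0$, I would set $M \triangleq A \sigma^2 A^T$ and note the direct computation $M A = A \sigma^2 (A^T A) = (\sigma^2 A^T A)\, A$, which identifies $A$ as an eigenvector of $M$ with eigenvalue $\sigma^2 A^T A$. Every vector in the $(q-1)$-dimensional orthogonal complement of $\mathrm{span}(A)$ is annihilated by $M$ and therefore contributes eigenvalue $0$. Consequently $I_q + M$ has one eigenvalue equal to $1 + \sigma^2 A^T A$ and $q-1$ eigenvalues equal to $1$, so $\det(I_q + M)$ is the product of these, which yields the claimed value $1 + \sigma^2 A^T A$.

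As a cross-check, I would also note the one-line alternative via Sylvester's determinant identity $\det(I_m + U V) = \det(I_n + V U)$, applied with $U = A \sigma^2 \in \mathbb{R}^{q \times 1}$ and $V = A^T \in \mathbb{R}^{1 \times q}$, which immediately reduces the left-hand side to $\det(I_1 + \sigma^2 A^T A) = 1 + \sigma^2 A^T A$. There is essentially no obstacle in this lemma: it is a textbook rank-one matrix determinant formula, and the only subtlety worth pointing out explicitly is the eigenvalue multiplicity count, which the spectral argument makes unambiguous; everything else is a routine calculation.
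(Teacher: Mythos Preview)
Your argument is correct: both the spectral computation (identifying the single nonzero eigenvalue $\sigma^2 A^T A$ of the rank-one matrix $A\sigma^2 A^T$) and the Sylvester determinant identity give the result cleanly, and your handling of the degenerate case $A=0$ is fine. The paper itself does not prove this lemma; it simply states it as the Weinstein--Aronszajn identity with a citation, so there is no in-paper proof to compare against. Your proposal supplies exactly the kind of short self-contained justification one would expect for this rank-one determinant formula.
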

\begin{lemma}\label{Ldec6_1}
Let $c,d \in \mathbb{R}_+$ such that $c\geq d$. Then, the function $f_{c,d}$ is non-decreasing, where
\begin{align*}
f_{c,d}:\mathbb{R}_+&\rightarrow \mathbb{R} \nonumber \\
x & \mapsto \frac{1}{2}\log\frac{c x+1}{d x+1}.
\end{align*}
\begin{proof}
The derivative of $f_{c,d}$ at $x\in \mathbb{R}_+$ is
$f'_{c,d}(x)
=\frac{1}{2 \ln 2}\frac{c-d}{(cx+1)(dx+1)} \geq 0. $
\end{proof}
\end{lemma}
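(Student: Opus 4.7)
My plan is to show $f_{c,d}$ is non-decreasing on $\mathbb{R}_+$ by differentiating and checking the sign of the derivative, which is the natural approach since $f_{c,d}$ is a smooth function of a single real variable. Splitting the logarithm, I would write
\[
f_{c,d}(x) = \frac{1}{2}\log(cx+1) - \frac{1}{2}\log(dx+1),
\]
so that term-by-term differentiation gives
\[
f'_{c,d}(x) = \frac{1}{2\ln 2}\left(\frac{c}{cx+1} - \frac{d}{dx+1}\right).
\]
Combining the two fractions over the common denominator $(cx+1)(dx+1)$, the cross terms $cdx$ cancel in the numerator, leaving
\[
f'_{c,d}(x) = \frac{1}{2\ln 2}\cdot\frac{c-d}{(cx+1)(dx+1)}.
\]
Since $c\geq d$ by hypothesis and both factors in the denominator are strictly positive for $x\in\mathbb{R}_+$, the derivative is nonnegative throughout $\mathbb{R}_+$, so $f_{c,d}$ is non-decreasing there.

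There is essentially no obstacle: the argument hinges on the one algebraic observation that the $cdx$ cross terms cancel when combining the two fractions, which reduces the numerator to the constant $c-d$ whose sign is controlled by the hypothesis. An equivalent calculus-free route would be to write $(cx+1)/(dx+1) = 1 + (c-d)x/(dx+1)$ and to verify that $(c-d)x/(dx+1)$ is non-decreasing on $\mathbb{R}_+$ whenever $c\geq d$ (its derivative is $(c-d)/(dx+1)^2\geq 0$); composing with the monotone $\log$ then yields the claim. The direct derivative computation above is, however, the shortest path and is what I would present.
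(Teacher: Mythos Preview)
Your proof is correct and essentially identical to the paper's own argument: both compute the derivative, simplify it to $\frac{1}{2\ln 2}\cdot\frac{c-d}{(cx+1)(dx+1)}$, and conclude nonnegativity from $c\geq d$.
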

We now prove the converse of Theorem \ref{theo1} through a series of lemmas.
\begin{lemma} \label{lemconv1}
Let $R_p \in \mathbb{R}_+$. An upper bound on the secret capacity $C_s(\mathbb{A}, R_p)$  for the  Gaussian source model $(\mathcal{X}\times \mathcal{Y}_{\mathcal{L}},p_{XY_{\mathcal{L}}})$ is given by
\begin{align}
    C_s(\mathbb{A}, R_p) \leq  \min_{\mathcal{A} \in \mathbb{A}} \min_{\mathcal{U} \in \mathbb{U}} \max_{ \substack{ 0<\sigma^2_{X \vert V}\leq \sigma^2_X \\ \text{s.t. }  I_p (\sigma^2_{X \vert V}, \mathcal{A}) \leq R_p}} I_s (\sigma^2_{X \vert V}, \mathcal{A}, \mathcal{U}). \label{lemc}
\end{align}
\end{lemma}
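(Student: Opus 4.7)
The plan is to upper-bound the secret capacity by a two-party secret-key generation capacity for each admissible pair $(\mathcal{A},\mathcal{U})$, then invoke the single-letter converse for one-way rate-limited secret-key generation, and finally specialize the auxiliary to be jointly Gaussian with $X$.

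First I would fix arbitrary $\mathcal{A}\in\mathbb{A}$ and $\mathcal{U}\in\mathbb{U}$. Any $(2^{nR_s},R_p,\mathbb{A},n)$ secret-sharing scheme satisfying (\ref{oct19_29_1})--(\ref{oct19_29_3}), when evaluated against only this single pair, is a source-model one-way secret-key generation scheme for the triple $(X,Y_\mathcal{A},Y_\mathcal{U})$ with public-communication budget $R_p$: the dealer plays Alice (source $X^n$), set $\mathcal{A}$ plays Bob (source $Y_\mathcal{A}^n$), and set $\mathcal{U}$ plays an eavesdropper (side information $Y_\mathcal{U}^n$). Hence $C_s(\mathbb{A},R_p)$ is bounded above by the secret-key capacity for every such pair, which produces the outer $\min_{\mathcal{A}}\min_{\mathcal{U}}$ on the right-hand side of (\ref{lemc}). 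I would then derive the single-letter converse on this secret-key capacity in the style of \cite{r8,r3}: Fano's inequality from (\ref{oct19_29_1}), the security constraint (\ref{oct19_29_2}) combined with the near-uniformity (\ref{oct19_29_3}) to lower-bound $H(S\mid M,Y_\mathcal{U}^n)$, and a standard Csisz\'ar-sum / time-sharing identification of an auxiliary $V$ with $V-X-(Y_\mathcal{A},Y_\mathcal{U})$ (which holds because the source is memoryless conditioned on $X$) yield
\[
C_s(\mathbb{A},R_p) \leq \min_{\mathcal{A}\in\mathbb{A}}\min_{\mathcal{U}\in\mathbb{U}}\max_{V:\,I(V;X)-I(V;Y_\mathcal{A})\leq R_p} \bigl[I(V;Y_\mathcal{A})-I(V;Y_\mathcal{U})\bigr].
\]

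Next I would argue that restricting $V$ to be jointly Gaussian with $X$ is without loss of optimality. Concretely, I would fix $\sigma_{X|V}^2$ via $h(X|V)=\tfrac{1}{2}\log(2\pi e\,\sigma_{X|V}^2)$, which fixes $I(V;X)$; the conditional entropy power inequality then lower-bounds $h(Y_\mathcal{A}|V)$ and $h(Y_\mathcal{U}|V)$ by the values attained under a jointly Gaussian surrogate $(V,X)$ with the same $\sigma_{X|V}^2$, so the rate constraint is preserved and the objective weakly increases. Once $V$ is jointly Gaussian with $X$, applying Lemma \ref{Laug27_1} to the covariance of $Y_\mathcal{A}=H_\mathcal{A} X + W_{Y_\mathcal{A}}$ (respectively $Y_\mathcal{U}=H_\mathcal{U} X + W_{Y_\mathcal{U}}$) gives $h(Y_\mathcal{A})-h(Y_\mathcal{A}|V) = \tfrac{1}{2}\log\frac{\sigma_X^2 O_\mathcal{A}+1}{\sigma_{X|V}^2 O_\mathcal{A}+1}$, and analogously for $\mathcal{U}$, so the objective reduces to $I_s(\sigma_{X|V}^2,\mathcal{A},\mathcal{U})$ and the constraint to $I_p(\sigma_{X|V}^2,\mathcal{A})\leq R_p$, which is exactly (\ref{lemc}).

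The main obstacle will be the Gaussian-extremality step: because the objective is a \emph{difference} of two mutual informations associated with two distinct linear-Gaussian channels (parameterized by $H_\mathcal{A}$ and $H_\mathcal{U}$), the conditional EPI must be applied to $Y_\mathcal{A}$ and $Y_\mathcal{U}$ simultaneously while keeping the rate constraint intact. I would handle this by matching $\sigma_{X|V}^2$ exactly, which matches $h(X|V)$ and thus $I(V;X)$, so that only the conditional output entropies change under the Gaussian replacement, and both change in the direction favorable to the bound. The subsequent closed-form evaluation of the minimax in (\ref{lemc}) -- which requires a saddle-point argument without a degradation assumption, as emphasized in the introduction -- will be handled by the lemmas that follow.
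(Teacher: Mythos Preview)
Your reduction of the secret-sharing problem to a two-party one-way secret-key generation problem for each fixed pair $(\mathcal{A},\mathcal{U})$, and the resulting outer $\min_{\mathcal{A}}\min_{\mathcal{U}}$, are correct and coincide with the paper's approach. The paper, however, does not re-derive the single-letter converse or the Gaussian extremality: it directly invokes \cite[Theorem~2]{r8}, which already delivers the bound parameterized by $\sigma^2_{X|V}\in(0,\sigma^2_X]$ in determinant form, and then applies Lemma~\ref{Laug27_1} to obtain the scalar expressions $I_p$ and $I_s$.

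Your proposed Gaussian-extremality argument has a genuine gap. The conditional EPI does lower-bound both $h(Y_\mathcal{A}\mid V)$ and $h(Y_\mathcal{U}\mid V)$ by their jointly Gaussian values, but these two bounds push the objective $I(V;Y_\mathcal{A})-I(V;Y_\mathcal{U})$ in \emph{opposite} directions: passing to the Gaussian surrogate with the same $h(X\mid V)$ increases $I(V;Y_\mathcal{A})$ (favorable for the upper bound) and simultaneously increases $I(V;Y_\mathcal{U})$ (unfavorable, since it enters with a minus sign). Thus your assertion that ``both change in the direction favorable to the bound'' is incorrect, and merely matching $\sigma^2_{X|V}$ does not suffice. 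The extremality result in \cite{r8} relies on a more delicate argument (exploiting, after sufficient-statistic reduction of $Y_\mathcal{A}$ and $Y_\mathcal{U}$ to scalar additive-Gaussian observations of $X$, the degraded ordering induced by $O_\mathcal{A}$ versus $O_\mathcal{U}$); for the purposes of Lemma~\ref{lemconv1} the cleanest route is the paper's: cite \cite[Theorem~2]{r8} directly and then simplify via Lemma~\ref{Laug27_1}.
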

\begin{proof}
Fix $\mathcal{A} \in \mathbb{A}$, $\mathcal{U} \in \mathbb{U}$. We first consider the secret-key generation model in \cite{r8} consisting of a transmitter (Alice), a receiver (Bob), and an eavesdropper (Eve), who observe $X^n$, $Y^n$, and $Z^n$, respectively, independently and identically distributed according to a Gaussian source $((\mathcal{X}\times\mathcal{Y}\times\mathcal{Z}), p_{XYZ})$, where $\mathcal{X}\triangleq \mathbb{R}$, $\mathcal{Y}\triangleq \mathbb{R}^{|\mathcal{A}|}$, $\mathcal{Z}\triangleq \mathbb{R}^{|\mathcal{U}|}$. In this model, a secret-key rate $R_k$ is achievable if after the transmission from Alice to Bob of message $M$ such that $H(M) \leq nR_p$ over an authenticated noiseless public channel, a secret key $K\in \llbracket 1, 2^{nR_k} \rrbracket$ is generated by Alice, and an estimate $\widehat{K}$ of $K$ is generated by Bob such that (i) $\lim_{n \rightarrow \infty} \mathbb{P}[K\neq \widehat{K}]=0$ (reliability), (ii) $\lim_{n \rightarrow \infty}  I(K;Z^n M)=0$ (security), 
and $\lim_{n \rightarrow \infty} \log\lceil2^{nR_k}\rceil-H(K)=0$ (uniformity). Moreover, the capacity region of this model is defined as  $\mathcal{R}({p_{XYZ}},\mathcal{A},\mathcal{U})\triangleq \{(R_p, R_k): (R_p,R_k) ~\text{is achievable}\}$.

Consider now the secret-sharing problem described in Section~\ref{secps} and the rate pair $(R_p,R_s) \in \mathcal{R}(p_{XY_{\mathcal{L}}},\mathbb{A})$. Then, by conditions (\ref{oct19_29_1}), (\ref{oct19_29_2}), and (\ref{oct19_29_3}), the rate pair $(R_p,R_s)$ also belongs to $\mathcal{R}({p_{XY_{\mathcal{A}}Y_{\mathcal{U}}}},\mathcal{A},\mathcal{U})$ for any $\mathcal{A} \in \mathbb{A}$, $\mathcal{U} \in \mathbb{U}$. Therefore, by \cite[Theorem 2]{r8}, we have for any $\mathcal{A} \in \mathbb{A}$, $\mathcal{U} \in \mathbb{U}$, 

\begin{align*}
    R_s & \leq \frac{1}{2}\log \left[ \frac{\det( H_\mathcal{A}\sigma^2_{X}H^T_\mathcal{A}+I)}{\det( H_\mathcal{A}\sigma^2_{X \vert V}H^T_\mathcal{A}+I)}   \frac{\det( H_\mathcal{U}\sigma^2_{X \vert V}H^T_\mathcal{U}+I)}{\det( H_\mathcal{U}\sigma^2_{X}H^T_\mathcal{U}+I)}\right]\!\!,\\
R_p & \geq \frac{1}{2}\log{ \frac{\sigma^2_{X}}{ \sigma^2_{X \vert V}} }-\frac{1}{2}\log \frac{\det( H_\mathcal{A}\sigma^2_{X}H_\mathcal{A}^T+I)}{\det( H_\mathcal{A} \sigma^2_{X \vert V}H_\mathcal{A}^T+I)},
\end{align*}
for some $\sigma^2_{X \vert V} \in ( 0, \sigma^2_X]$. 
Finally, using Lemma \ref{Laug27_1} and the definition of $O_\mathcal{A}$, $\mathcal{A} \in \mathbb{A}$ and $O_\mathcal{U}$, $\mathcal{U} \in \mathbb{U}$,  we have \eqref{lemc}.
  \end{proof}
  \begin{lemma} \label{lemconv3}
Let $R_p \in \mathbb{R}_+$. Let $\mathcal{A} \in \mathbb{A}$, $\mathcal{U} \in \mathbb{U}$, and assume that $O_{\mathcal{A}} \geq O_{\mathcal{U}}$. 
Then, we have
\begin{align}
    &\max_{ \substack{ 0<\sigma^2_{X \vert V}\leq \sigma^2_X \\ \text{s.t. }  I_p (\sigma^2_{X \vert V}, \mathcal{A}) \leq R_p}} I_s (\sigma^2_{X \vert V}, \mathcal{A}, \mathcal{U}) \nonumber\\ &=\frac{1}{2}\log\frac{\sigma^2_XO_{\mathcal{U}} 2^{-2R_p}+\sigma^2_X O_{\mathcal{A}} (1-2^{-2R_p})+1}{\sigma^2_{X}O_{\mathcal{U}}+1}. \label{eqoptl}
\end{align}
\end{lemma}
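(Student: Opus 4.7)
The plan is to view the feasible set as parametrized by the single real variable $\alpha \triangleq \sigma^2_{X\vert V}\in(0,\sigma^2_X]$ and reduce \eqref{eqoptl} to a one-dimensional optimization with a scalar inequality constraint. The strategy has two steps: first, show that both the objective $I_s(\alpha,\mathcal{A},\mathcal{U})$ and the constraint function $I_p(\alpha,\mathcal{A})$ are monotone non-increasing in $\alpha$, so that the maximum is attained at the smallest admissible $\alpha$, namely at the boundary $I_p(\alpha,\mathcal{A})=R_p$; second, solve this boundary equation in closed form and substitute back into $I_s$.

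For the monotonicity step, I would rewrite
\begin{align*}
I_s(\alpha,\mathcal{A},\mathcal{U}) = \frac{1}{2}\log\frac{\sigma^2_X O_\mathcal{A}+1}{\sigma^2_X O_\mathcal{U}+1} - \frac{1}{2}\log\frac{\alpha O_\mathcal{A}+1}{\alpha O_\mathcal{U}+1},
\end{align*}
so that only the second term depends on $\alpha$; Lemma~\ref{Ldec6_1} applied with $(c,d)=(O_\mathcal{A},O_\mathcal{U})$ together with the hypothesis $O_\mathcal{A}\geq O_\mathcal{U}$ shows this term is non-decreasing in $\alpha$, hence $I_s$ is non-increasing in $\alpha$. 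A direct derivative shows $I_p(\cdot,\mathcal{A})$ is strictly decreasing on $(0,\sigma^2_X]$, with $I_p(\sigma^2_X,\mathcal{A})=0$ and $I_p(\alpha,\mathcal{A})\to+\infty$ as $\alpha\to 0^+$. Consequently, for every $R_p\geq 0$ the feasible set is a nonempty closed interval $[\alpha^\star,\sigma^2_X]$, where $\alpha^\star$ is the unique solution of $I_p(\alpha^\star,\mathcal{A})=R_p$, and the supremum of $I_s$ over the feasible set is attained at $\alpha^\star$.

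For the second step, clearing logarithms in $I_p(\alpha^\star,\mathcal{A})=R_p$ gives a linear equation in $\alpha^\star$ whose solution is
\begin{align*}
\alpha^\star = \frac{\sigma^2_X\, 2^{-2R_p}}{1+\sigma^2_X O_\mathcal{A}(1-2^{-2R_p})}.
\end{align*}
Substituting back into $I_s$, the factor $(\sigma^2_X O_\mathcal{A}+1)/(\alpha^\star O_\mathcal{A}+1)$ collapses to $1+\sigma^2_X O_\mathcal{A}(1-2^{-2R_p})$ by construction of $\alpha^\star$, and assembling this with $(\alpha^\star O_\mathcal{U}+1)/(\sigma^2_X O_\mathcal{U}+1)$ yields the right-hand side of \eqref{eqoptl} after cancellation of the common denominator introduced by $\alpha^\star$. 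The main obstacle is purely algebraic bookkeeping of these cancellations; conceptually, the hypothesis $O_\mathcal{A}\geq O_\mathcal{U}$ enters exactly once, to guarantee that $I_s$ is non-increasing so that the $R_p$-constraint is active at the optimum. The degenerate boundary case $R_p=0$ forces $\alpha^\star=\sigma^2_X$ and both sides reduce to $0$, matching the formula.
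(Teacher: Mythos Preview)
Your proposal is correct and follows essentially the same route as the paper: rewrite $I_s$ so that Lemma~\ref{Ldec6_1} with $(c,d)=(O_\mathcal{A},O_\mathcal{U})$ shows it is non-increasing in $\sigma^2_{X\vert V}$, observe that $I_p$ is (strictly) decreasing so the constraint binds at the optimum, solve $I_p(\alpha^\star,\mathcal{A})=R_p$ in closed form, and substitute back. Your version is slightly more careful than the paper's in explicitly verifying the limits of $I_p$ (hence existence and uniqueness of $\alpha^\star$) and in checking the boundary case $R_p=0$, but the argument is the same.
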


\begin{proof}
Fix $\mathcal{A} \in \mathbb{A}$ and $\mathcal{U} \in \mathbb{U}$. Let $\sigma^{2\star}_{X \vert V}(\mathcal{A},\mathcal{U})$ be an optimal solution on the left-hand side of~(\ref{eqoptl}). By writing $I_s (\sigma^2_{X \vert V}, \mathcal{A}, \mathcal{U})$ as
$$
I_s (\sigma^2_{X \vert V}, \mathcal{A}, \mathcal{U}) = \frac{1}{2}\log\frac{ \sigma^2_X O_\mathcal{A}+1}{ \sigma^2_X O_\mathcal{U}+1}-\frac{1}{2}\log\frac{\sigma^2_{X \vert V}O_\mathcal{A}+1}{\sigma^2_{X \vert V}O_\mathcal{U}+1},
$$
we have that $I_s (\sigma^2_{X \vert V}, \mathcal{A}, \mathcal{U})$ is a non-increasing function of $\sigma^2_{X \vert V}$  by Lemma \ref{Ldec6_1} because $O_{\mathcal{A}} \geq O_{\mathcal{U}}$. Hence, $\sigma^{2\star}_{X \vert V}(\mathcal{A},\mathcal{U})$ must be the smallest $\sigma^2_{X \vert V} \in ( 0, \sigma^2_X]$ that satisfies the constraint $I_p (\sigma^2_{X \vert V}, \mathcal{A}) \leq R_p$.  However, $I_p (\sigma^2_{X \vert V}, \mathcal{A})$ is a non-increasing function of $\sigma^2_{X \vert V}$; thus, we must have $I_p (\sigma^{2\star}_{X \vert V}(\mathcal{A},\mathcal{U}), \mathcal{A}) = R_p$, i.e., $$R_p=\frac{1}{2}\log\frac{\sigma^2_{X}}{\sigma^{2\star}_{X \vert V}(\mathcal{A},\mathcal{U})}-\frac{1}{2}\log\frac{\sigma^2_{X}O_{\mathcal{A}}+1}{\sigma^{2\star}_{X \vert V}(\mathcal{A},\mathcal{U})O_{\mathcal{A}}+1},$$
which gives
\begin{align} \label{eqoptsigm}
\sigma^{2\star}_{X \vert V}(\mathcal{A},\mathcal{U})=\frac{\sigma^2_{X}}{\sigma^2_{X}O_{\mathcal{A}}(2^{2R_p}-1) +2^{2R_p}}.
\end{align}
Plugging in this value for $\sigma^{2\star}_{X \vert V}(\mathcal{A},\mathcal{U})$ in $I_s (\sigma^{2\star}_{X \vert V}(\mathcal{A},\mathcal{U}), \mathcal{A}, \mathcal{U})$ gives \eqref{eqoptl}.

\end{proof}
\begin{lemma} \label{lemconv2}
Assume that for any $\mathcal{A} \in \mathbb{A}$, $\mathcal{U} \in \mathbb{U}$, we have $O_{\mathcal{A}} \geq O_{\mathcal{U}}$. Let $R_p \in \mathbb{R}_+$. Then, we have
\begin{align}
    &\min_{\mathcal{A} \in \mathbb{A}} \min_{\mathcal{U} \in \mathbb{U}} \max_{ \substack{ 0<\sigma^2_{X \vert V}\leq \sigma^2_X \\ \text{s.t. }  I_p (\sigma^2_{X \vert V}, \mathcal{A}) \leq R_p}} I_s (\sigma^2_{X \vert V}, \mathcal{A}, \mathcal{U})  \nonumber\\&= \max_{ \substack{ 0<\sigma^2_{X \vert V}\leq \sigma^2_X \\ \text{s.t. }  I_p (\sigma^2_{X \vert V}, \mathcal{A}^\star) \leq R_p}} \min_{\mathcal{A} \in \mathbb{A}} \min_{\mathcal{U} \in \mathbb{U}}  I_s (\sigma^2_{X \vert V}, \mathcal{A}, \mathcal{U}).\label{lemc2}
\end{align}
\end{lemma}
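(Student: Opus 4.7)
The plan is to show that both sides of \eqref{lemc2} evaluate to the same closed-form expression, namely
$$\frac{1}{2}\log\frac{\sigma^2_XO_{\mathcal{U}^\star} 2^{-2R_p}+\sigma^2_X O_{\mathcal{A}^\star} (1-2^{-2R_p})+1}{\sigma^2_{X}O_{\mathcal{U}^\star}+1},$$
so that the outer min and max may be swapped as claimed. The assumption $O_{\mathcal{A}}\geq O_{\mathcal{U}}$ is what allows Lemma~\ref{lemconv3} to be applied uniformly, and Lemma~\ref{Ldec6_1} together with $0<\sigma^2_{X\vert V}\leq \sigma^2_X$ is what drives the monotonicity arguments on the RHS.

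For the left-hand side, I would first apply Lemma~\ref{lemconv3} to evaluate the inner maximum over $\sigma^2_{X\vert V}$ in closed form. This reduces the problem to minimizing, over $\mathcal{A}\in\mathbb{A}$ and $\mathcal{U}\in\mathbb{U}$, the function
$$g(O_\mathcal{A},O_\mathcal{U})\triangleq\frac{\sigma^2_XO_{\mathcal{U}} 2^{-2R_p}+\sigma^2_X O_{\mathcal{A}} (1-2^{-2R_p})+1}{\sigma^2_{X}O_{\mathcal{U}}+1}.$$
A direct differentiation shows $\partial g/\partial O_\mathcal{A}\geq 0$ since $1-2^{-2R_p}\geq 0$, so the minimum over $\mathcal{A}$ occurs at $\mathcal{A}^\star$ (smallest $O_\mathcal{A}$). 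Similarly, a short calculation gives $\partial g/\partial O_\mathcal{U}=-\sigma^2_X(1-2^{-2R_p})(\sigma^2_X O_\mathcal{A}+1)/(\sigma^2_X O_\mathcal{U}+1)^2\leq 0$, so the minimum over $\mathcal{U}$ occurs at $\mathcal{U}^\star$ (largest $O_\mathcal{U}$). Monotonicity of $\log$ finishes this side.

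For the right-hand side, I would fix $\sigma^2_{X\vert V}\in(0,\sigma^2_X]$ and analyze $I_s(\sigma^2_{X\vert V},\mathcal{A},\mathcal{U})$ rewritten as
$$I_s(\sigma^2_{X\vert V},\mathcal{A},\mathcal{U})=\frac{1}{2}\log\frac{\sigma^2_X O_\mathcal{A}+1}{\sigma^2_{X\vert V} O_\mathcal{A}+1}-\frac{1}{2}\log\frac{\sigma^2_X O_\mathcal{U}+1}{\sigma^2_{X\vert V} O_\mathcal{U}+1}.$$
Since $\sigma^2_X\geq \sigma^2_{X\vert V}$, Lemma~\ref{Ldec6_1} implies that each of the two terms (as a function of $O_\mathcal{A}$ and $O_\mathcal{U}$ respectively) is non-decreasing. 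Hence, independently of $\sigma^2_{X\vert V}$, minimizing over $\mathcal{A}$ selects $\mathcal{A}^\star$ (to shrink the positive term) and minimizing over $\mathcal{U}$ selects $\mathcal{U}^\star$ (to enlarge the subtracted term). The key point is that the minimizing pair $(\mathcal{A}^\star,\mathcal{U}^\star)$ does not depend on $\sigma^2_{X\vert V}$, which allows pushing the min inside the max. The RHS thus reduces to $\max_{\sigma^2_{X\vert V}:\,I_p(\sigma^2_{X\vert V},\mathcal{A}^\star)\leq R_p}I_s(\sigma^2_{X\vert V},\mathcal{A}^\star,\mathcal{U}^\star)$, to which I apply Lemma~\ref{lemconv3} once more to recover exactly the same closed-form expression derived for the LHS.

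The only subtle point, and the one I would flag as the main obstacle, is that the inner constraint set $\{\sigma^2_{X\vert V}:I_p(\sigma^2_{X\vert V},\mathcal{A})\leq R_p\}$ on the LHS formally depends on $\mathcal{A}$, so the min-max interchange is not an immediate Sion-type argument; it works only because the selector $(\mathcal{A}^\star,\mathcal{U}^\star)$ turns out to be the joint minimizer at every admissible $\sigma^2_{X\vert V}$, and because the constraint set shrinks monotonically in $O_\mathcal{A}$ (so that restricting it to $\mathcal{A}^\star$ on the RHS corresponds to the tightest, hence binding, constraint). Once this alignment is verified via the monotonicity computations above, both sides reduce to the same explicit expression and equality in \eqref{lemc2} follows.
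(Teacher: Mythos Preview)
Your proposal is correct and follows essentially the same route as the paper: both establish that $\min_{\mathcal{A},\mathcal{U}} I_s(\sigma^2_{X\vert V},\mathcal{A},\mathcal{U})=I_s(\sigma^2_{X\vert V},\mathcal{A}^\star,\mathcal{U}^\star)$ pointwise via Lemma~\ref{Ldec6_1} (your RHS step is the paper's \eqref{eqmax}) and then reduce both sides to the value of Lemma~\ref{lemconv3} at $(\mathcal{A}^\star,\mathcal{U}^\star)$. The only cosmetic difference is in the LHS minimization: you differentiate the closed form from Lemma~\ref{lemconv3} directly in $(O_\mathcal{A},O_\mathcal{U})$, whereas the paper instead compares the optimizers $\sigma^{2\star}_{X\vert V}(\mathcal{A},\mathcal{U})$ via \eqref{eqoptsigm} combined with \eqref{eqmax}; both yield the same conclusion.

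One harmless slip in your final paragraph: the feasible set $\{\sigma^2_{X\vert V}:I_p(\sigma^2_{X\vert V},\mathcal{A})\leq R_p\}$ actually \emph{grows} in $O_\mathcal{A}$ (since $I_p$ is non-increasing in $O_\mathcal{A}$ by Lemma~\ref{Ldec6_1}), so $\mathcal{A}^\star$ gives the tightest constraint precisely because it has the \emph{smallest} $O_\mathcal{A}$---your conclusion is right but the stated direction of monotonicity is reversed. This remark is not used in your actual argument, so the proof stands.
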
  
\begin{proof}
By Lemma \ref{Ldec6_1}, we have for any $\sigma^2_{X \vert V} \in ( 0, \sigma^2_X]$, $\mathcal{A} \in \mathbb{A}$, $\mathcal{U} \in \mathbb{U}$, 
\begin{align*}
    \frac{1}{2}\log\frac{ \sigma^2_X O_\mathcal{A}+1}{\sigma^2_{X \vert V}O_\mathcal{A}+1} 
&\geq \frac{1}{2}\log\frac{ \sigma^2_X O_{\mathcal{A}^\star}+1}{\sigma^2_{X \vert V}O_{\mathcal{A}^\star}+1},\\
-\frac{1}{2}\log\frac{ \sigma^2_X O_\mathcal{U}+1}{\sigma^2_{X \vert V}O_\mathcal{U}+1} 
&\geq -\frac{1}{2}\log\frac{ \sigma^2_X O_{\mathcal{U}^\star}+1}{\sigma^2_{X \vert V}O_{\mathcal{U}^\star}+1};
\end{align*}
hence, $I_s({\sigma^2_{X \vert V},\mathcal{A},\mathcal{U}})\geq I_s({\sigma^2_{X\vert V},\mathcal{A}^\star,\mathcal{U}^\star})$, and we conclude that for any $\sigma^2_{X\vert V} \in ( 0, \sigma^2_X]$, 
\begin{align} 
    \min_{\mathcal{A} \in \mathbb{A}} \min_{\mathcal{U} \in \mathbb{U}} I_s (\sigma^2_{X \vert V}, \mathcal{A}, \mathcal{U})  =  I_s (\sigma^2_{X \vert V}, \mathcal{A}^\star, \mathcal{U}^\star).  \label{eqmax}
\end{align}
   Then, we have 
\begin{align*}
   &\min_{\mathcal{A} \in \mathbb{A}} \min_{\mathcal{U} \in \mathbb{U}} \max_{ \substack{ 0<\sigma^2_{X \vert V}\leq \sigma^2_X \\ \text{s.t. }  I_p (\sigma^2_{X \vert V}, \mathcal{A}) \leq R_p}} I_s (\sigma^2_{X \vert V}, \mathcal{A}, \mathcal{U}) \\
   & \stackrel{(a)}{=} \min_{\mathcal{A} \in \mathbb{A}} \min_{\mathcal{U} \in \mathbb{U}}  I_s (\sigma^{2\star}_{X \vert V}(\mathcal{A},\mathcal{U}), \mathcal{A}, \mathcal{U}) \\
        & \stackrel{(b)}{=}  I_s (\sigma_{X\vert V}^{2\star}(\mathcal{A}^\star,\mathcal{U}^\star), \mathcal{A}^\star , \mathcal{U}^\star) \\
         & =  \max_{ \substack{ 0<\sigma^2_{X \vert V}\leq \sigma^2_X \\ \text{s.t. }  I_p (\sigma^2_{X \vert V}, \mathcal{A}^\star) \leq R_p}}   I_s (\sigma^2_{X \vert V}, \mathcal{A}^\star, \mathcal{U}^\star) \\
    & \stackrel{(c)}{=}  \max_{ \substack{ 0<\sigma^2_{X \vert V}\leq \sigma^2_X \\ \text{s.t. }  I_p (\sigma^2_{X \vert V}, \mathcal{A}^\star) \leq R_p}} \min_{\mathcal{A} \in \mathbb{A}} \min_{\mathcal{U} \in \mathbb{U}}  I_s (\sigma^2_{X \vert V}, \mathcal{A}, \mathcal{U}) ,
\end{align*}
where in $(a)$ we have defined for $\mathcal{A} \in \mathbb{A}$, $\mathcal{U} \in \mathbb{U}$, $$\sigma^{2\star}_{X \vert V}(\mathcal{A},\mathcal{U}) \triangleq \displaystyle\argmax_{\substack{ 0<\sigma^2_{X \vert V}\leq \sigma^2_X \\ \text{s.t. }  I_p (\sigma^2_{X \vert V}, \mathcal{A}) \leq R_p}} I_s({\sigma^2_{X \vert V},\mathcal{A},\mathcal{U}}),$$  $(b)$ holds because for any $\mathcal{A} \in \mathbb{A}$, $\mathcal{U} \in \mathbb{U}$, we have  $I_s (\sigma^{2\star}_{X \vert V}(\mathcal{A},\mathcal{U}), \mathcal{A}, \mathcal{U}) \geq I_s (\sigma^{2\star}_{X \vert V}(\mathcal{A},\mathcal{U}), \mathcal{A}
^\star, \mathcal{U}^\star) \geq I_s (\sigma^{2\star}_{X \vert V}(\mathcal{A}^\star,\mathcal{U}^\star), \mathcal{A}
^\star, \mathcal{U}^\star)$, where the first inequality holds by \eqref{eqmax}, and the second inequality holds because $I_s (\sigma^{2\star}_{X \vert V}(\mathcal{A},\mathcal{U}), \mathcal{A}
^\star, \mathcal{U}^\star)$ is a non-increasing function of $\sigma^{2\star}_{X \vert V}(\mathcal{A},\mathcal{U})$ by Lemma~\ref{Ldec6_1}, and $\sigma^{2\star}_{X \vert V}(\mathcal{A}^\star,\mathcal{U}^\star) \geq \sigma^{2\star}_{X \vert V}(\mathcal{A},\mathcal{U})$ by \eqref{eqoptsigm} in the proof of Lemma \ref{lemconv3}, and $(c)$ holds by \eqref{eqmax}.
\end{proof}

Next, we remark that if there exist $\mathcal{A} \in \mathbb{A}$ and $\mathcal{U} \in \mathbb{U}$ such that $O_{\mathcal{A}} < O_{\mathcal{U}}$, then $C_s(\mathbb{A}, R_p)=0$ by Lemma \ref{lemconv1} and Lemma \ref{Ldec6_1} applied to $f_{\sigma^{2}_{X},\sigma^{2}_{X \vert V}}$. Thus, we obtain the converse of Theorem \ref{theo1} by combining Lemmas \ref{lemconv1}, \ref{lemconv3}, and \ref{lemconv2}.
\section{Achievability Proof of Theorem \ref{theo1}} \label{sec:ach}

To prove the achievability part of Theorem \ref{theo1}, we first prove an achievability result for discrete random variables in Section~\ref{secdis} and then extend our result to Gaussian random variables by a quantization argument in Section \ref{seccont}.

\subsection{Discrete case} \label{secdis}
Our coding scheme decouples the requirements \eqref{oct19_29_1} (reliability) and \eqref{oct19_29_2} (security with respect to unauthorized groups of colluding users). Specifically, as described next, we repeat $q \in \mathbb{N}$ times a reconciliation step to handle \eqref{oct19_29_1} via a compound version of Wyner-Ziv coding and then perform a privacy amplification step to handle~\eqref{oct19_29_2} via  universal hashing implemented with extractors. Note that Wyner-Ziv coding is a key component to handle rate-limited communication constraints as in rate-limited secret-key generation \cite{csiszar2000} and biometric secrecy system models, e.g., \cite{chou2015polar,igna2012,igna2013,gunlu2021,gunlu2019}, which rely on rate-limited secret-key generation. Here, unlike in \cite{chou2015polar,igna2012,igna2013,gunlu2021,gunlu2019}, we employ a compound version of Wyner-Ziv coding because unlike in \cite{chou2015polar,igna2012,igna2013,gunlu2021,gunlu2019}, we simultaneously consider multiple reliability constraints due to the presence of an access structure.
\subsubsection{Reconciliation step}\label{secrec}
Let $n \in \mathbb{N}$ and $\epsilon>0$.  For a probability mass function $p_X$, denote the set of $\epsilon$-letter typical sequences~\cite{orlitsky2001coding} (see also \cite{r4}) with respect to $p_X$ by $\mathcal{T}_{\epsilon}^n(X)$, and define $supp(p_X)\triangleq \{x \in \mathcal{X}:p_X(x)>0\}$ and $\mu_X\triangleq\min_{x \in supp(p_X)}p_X(x)$. Define $\epsilon_1\triangleq\frac{1}{2}\epsilon$. \\
{\bf Code construction:} Fix a joint probability distribution $p_{VXY_{\mathcal{L}}}$ on $\mathcal{V}\times \mathcal{X}\times \mathcal{Y}_{\mathcal{L} }$,  where $V$ is an auxiliary random variable such that $ V - X -Y_{\mathcal{L}} $ forms a Markov chain. Define $R_v\triangleq  \max_{\mathcal{A} \in {\mathbb{A}}}H(V\vert Y_{\mathcal{A} }) - H(V|X)+6\epsilon H(V)$, $R_v'\triangleq H(V) - \max_{\mathcal{A} \in {\mathbb{A}}} H(V|Y_{\mathcal{A} })-3\epsilon H(V)$. Generate $2^{n(R_v+R_v')}$ codewords, labeled $v^n(\omega,\nu) $ with $ (\omega,\nu)\in \llbracket 1,2^{nR_v} \rrbracket \times \llbracket 1,2^{nR_v'} \rrbracket$, by generating the symbols $v_i(\omega,\nu)$ for $i \in \llbracket 1,n \rrbracket$ and $(\omega,\nu)\in \llbracket 1,2^{nR_v} \rrbracket \times \llbracket 1,2^{nR_v'} \rrbracket$ independently according to~$p_V$.\\
{\bf Encoding:} Given $x^n$, find a pair $(\omega,\nu)$ such that $(x^n,v^n(\omega,\nu))\in \mathcal{T}_{\epsilon}^n(XV)$. If there are several pairs, choose one (according to the lexicographic order); otherwise, set $(\omega, \nu)=(1,1)$. Define $v^n\triangleq v^n(\omega,\nu)$, and transmit $m \triangleq \omega$.\\
{\bf Decoding:} Let $\mathcal{A} \in \mathbb{A}$. Given $y_{\mathcal{A} }^n$ and $m$, find $\tilde{\nu}_{\mathcal{A} }$ such that $(y_{\mathcal{A} }^n,v^n(\omega,\tilde{\nu}_{\mathcal{A} }))\in \mathcal{T}_{\epsilon}^n(Y_{\mathcal{A} }V)$. If there is one or more  $\tilde{\nu}_{\mathcal{A} }$, then choose the smallest; otherwise, set  $\tilde{\nu}_{\mathcal{A} }=1$. Define $\widehat{v}_{\mathcal{A}}^n\triangleq v^n(\omega,\tilde{\nu}_{\mathcal{A}})$.\\
\noindent{}{\bf Probability of error:}
The random variable that represents the randomly generated code is denoted by $C_n$.  As shown in Appendix \ref{Appeq10}, there exists a codebook $\mathcal{C}_n^\star$ such that 
\begin{align} \label{eqerror}
\max_{\mathcal{A} \in \mathbb{A}} \mathbb{P}[V^n\neq \widehat{V}^n_\mathcal{A}] \leq \vert\mathbb{A} \vert\max_{\mathcal{A} \in \mathbb{A}}\delta(n,\epsilon,\mathcal{A}),
\end{align}
where 
\begin{align*}
    \delta(n,\epsilon,\mathcal{A}) 
    &\triangleq 2 \vert \mathcal{X}\vert\vert\mathcal{Y}_\mathcal{A}\vert e^{-n\epsilon_1^2\mu_{XY_\mathcal{A}}}+2^{-n\epsilon H(V)}\\
    &\phantom{-} + \exp (-(1-2\vert\mathcal{V}\vert \vert\mathcal{X} \vert e^{-n\frac{(\epsilon-\epsilon_1)^2}{1+\epsilon_1}\mu_{VX}})2^{\epsilon n H(V)})\\
    &\phantom{-} +2 \vert \mathcal{V}\vert\vert \mathcal{X}\vert\vert\mathcal{Y}_\mathcal{A}\vert e^{-n\frac{(\epsilon-\epsilon_1)^2}{1+\epsilon_1}\mu_{VXY_\mathcal{A}}}.
\end{align*}

\subsubsection{Privacy amplification step}
Let $q, n \in \mathbb{N}$, and define $N\triangleq nq$.   The reconciliation step is repeated $q$ times such that the dealer has $V^N = (V^n)^q$ and the participants in $\mathcal{A}\in \mathbb{A}$ have $(\widehat{V}^n_\mathcal{A})^q$. Note that the total public communication $M \in \mathcal{M}$ is such that $\frac{ H(M)}{N} \leq  \frac{\log |\mathcal{M}|}{N} = \max_{\mathcal{A}\in \mathbb{A}}I(X;V\vert Y_{\mathcal{A} })+6\epsilon H(V)$.
Next, another round of reconciliation with negligible communication is performed to ensure that $\max_{\mathcal{A}\in \mathbb{A}} \mathbb{P}[(V^n)^q\neq (\widehat{V}^n_\mathcal{A})^q] \leq \delta(q)$, where $\lim_{q \to \infty } \delta(q) = 0$ when $n$ is fixed.
Finally, the dealer computes $S=g(V^N,U_d)$, while the participants in $\mathcal{A}\in \mathbb{A}$ compute $\widehat{S}({\mathcal{A} })=g(\widehat{V}_{\mathcal{A} }^N,U_d)$, where $U_d$ is a sequence of $d$ (to be defined later) uniformly distributed random bits, and $g:\{0, 1\}^N\times \{0, 1\}^d \rightarrow \{0, 1\}^k$ is to be defined later.

\subsubsection*{Analysis of reliability}

 The secrets computed by the dealer and the participants in $\mathcal{A}\in \mathbb{A}$ are asymptotically the same for a fixed $n$ as $q$ goes to infinity.
\begin{align*}
\mathbb{P}[\widehat{S}(\mathcal{A})\neq S]\leq \mathbb{P}[ (\widehat{V}^n_\mathcal{A})^q \neq (V^n)^q]\leq \delta(q).
\end{align*}
\subsubsection*{Analysis of security}
Let the min-entropy of a discrete random variable $X$, defined over $\mathcal{X}$ with probability mass function  $p_X$, be denoted by $H_{\infty}(X) \triangleq - \log \left( \max_{x\in \mathcal{X}} p_X(x) \right)$. We will use the following lemmas: 
\begin{lemma}[Adapted from \cite{maurer}]\label{Ldec6_2}
Let $E_{\mathcal{U}}$ be the random variable that represents the total knowledge about $V^N$ available to participants in $\mathcal{U} \in \mathbb{U}$. Let $e_{\mathcal{U}}$ be a particular realization of $E_{\mathcal{U}}$. 
If $H_\infty(V^N\vert E_{\mathcal{U}}=e_{\mathcal{U}})\geq\gamma N$, for some $\gamma\in [0, 1] \backslash \{0,1\}$, then there exists an extractor $g:\{0, 1\}^N\times \{0, 1\}^d \rightarrow \{0, 1\}^k$ with $d \leq N \delta(N)$ and $k\geq N (\gamma-\delta(N))$, where $\delta(N)$ is such that $\lim_{N\rightarrow+\infty}\delta(N)=0$. Moreover, 
\begin{align*}
H(S\vert U_d, &E_{\mathcal{U}}=e_{\mathcal{U}})\geq k-\delta^\star(N),
\end{align*}
with $\delta^\star(N)=2^{-\sqrt{N}/\log N}(k+\sqrt{N}/\log N)$.
\end{lemma}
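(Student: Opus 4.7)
The plan is to instantiate the extractor $g$ via two-universal hashing and to invoke the Leftover Hash Lemma (Impagliazzo--Levin--Luby, in the privacy-amplification formulation due to Bennett--Brassard--Cr\'epeau--Maurer). A family $\{g_u\}_{u\in\{0,1\}^d}$ of maps $\{0,1\}^N\to\{0,1\}^k$ is two-universal if for any $v\neq v'$, $\Pr_{U_d}[g_{U_d}(v)=g_{U_d}(v')]\leq 2^{-k}$. Such families exist with a seed length $d$ that can be taken linear in $N$ (e.g.\ random affine maps over $\mathbb{F}_2$), so the requirement $d\leq N\delta(N)$ is satisfied for any $\delta(N)$ vanishing slowly enough.

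\textbf{Applying the Leftover Hash Lemma.} I would set $k=\lfloor \gamma N-2\sqrt{N}/\log N\rfloor$, which satisfies $k\geq N(\gamma-\delta(N))$ for a suitable $\delta(N)\to 0$. Conditioning on $E_\mathcal{U}=e_\mathcal{U}$, the law of $V^N$ has min-entropy at least $\gamma N$ by hypothesis, and a standard calculation using the two-universal property bounds the collision probability of $(g_{U_d}(V^N),U_d)$ by $2^{-(k+d)}(1+2^{k-\gamma N})$. The $L_2\to L_1$ inequality (equivalently, Cauchy--Schwarz) then yields
\[
\bigl\|P_{S,U_d\mid E_\mathcal{U}=e_\mathcal{U}}-U_{\{0,1\}^{k+d}}\bigr\|_1 \leq \tfrac{1}{2}\sqrt{2^{k-\gamma N}} \leq \tfrac{1}{2}\, 2^{-\sqrt{N}/\log N}.
\]

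\textbf{Conversion to conditional entropy.} Denote the bound above by $\epsilon$. I would then invoke the standard continuity estimate: if $Q$ is a distribution on $\{0,1\}^k$ with $\|Q-U_{2^k}\|_1\leq \epsilon$, then $H(Q)\geq k-\epsilon k-h_2(\epsilon)$, where $h_2$ denotes the binary entropy function. Applied to the conditional law of $S=g_{U_d}(V^N)$ given $(U_d,E_\mathcal{U}=e_\mathcal{U})$, and using $h_2(\epsilon)\leq \epsilon\log(1/\epsilon)+O(\epsilon)$ with our choice of $\epsilon$, this gives
\[
H(S\mid U_d,E_\mathcal{U}=e_\mathcal{U})\geq k-2^{-\sqrt{N}/\log N}\bigl(k+\sqrt{N}/\log N\bigr),
\]
matching the claimed lower bound $k-\delta^\star(N)$.

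\textbf{Main obstacle.} The proof uses only standard ingredients; the only real work is parameter bookkeeping --- choosing $k$, $\epsilon$, and $\delta(N)$ so that the Leftover Hash Lemma, the $L_1$-to-entropy estimate, and the advertised lower bound on $k$ all line up with the specific $\sqrt{N}/\log N$ scaling appearing in $\delta^\star(N)$. Because the lemma is cited as adapted from \cite{maurer}, I would follow Maurer's original derivation, using only that conditioning on $E_\mathcal{U}=e_\mathcal{U}$ reduces the argument to the unconditional privacy-amplification setting with a source of min-entropy at least $\gamma N$.
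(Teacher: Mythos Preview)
The paper does not give its own proof of this lemma; it is simply cited as adapted from \cite{maurer} (and the introduction also points to \cite{vadhan} for extractors). So there is no in-paper argument to compare against, only the question of whether your sketch actually delivers the stated parameters.

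Your Leftover Hash Lemma step and the $L_1$-to-entropy conversion are fine and do recover the bound $H(S\mid U_d,E_\mathcal{U}=e_\mathcal{U})\geq k-\delta^\star(N)$ with the advertised $\delta^\star(N)$. The real gap is the seed-length claim. You write that two-universal families ``exist with a seed length $d$ that can be taken linear in $N$ \ldots\ so the requirement $d\leq N\delta(N)$ is satisfied for any $\delta(N)$ vanishing slowly enough.'' That implication is backwards: if $d=cN$ for some constant $c>0$ (which is exactly what random affine maps over $\mathbb{F}_2$ give), then $d\leq N\delta(N)$ forces $\delta(N)\geq c$, which does \emph{not} vanish. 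Plain two-universal hashing therefore cannot meet the lemma's requirement $d=o(N)$, and this matters in the paper because the seed $U_d$ is counted as public communication; a linear-in-$N$ seed would add a positive constant to $R_p$ and break the rate accounting.

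To obtain $d\leq N\delta(N)$ with $\delta(N)\to 0$ you need a genuine strong extractor with sublinear seed (e.g.\ the constructions surveyed in \cite{vadhan}), which is precisely why the paper cites the extractor literature rather than bare universal hashing. The Maurer-style argument you outline then goes through unchanged once $g$ is taken to be such an extractor: the min-entropy hypothesis, the choice $k=\lfloor\gamma N-2\sqrt{N}/\log N\rfloor$, and the statistical-distance-to-entropy step all carry over, because strong extractors by definition yield the same closeness-to-uniform guarantee you derived from two-universality.
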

\begin{lemma}[\!\cite{maurer}, see also  \cite{r2}]{\label{refversion}}
Consider a discrete memoryless source  $(\mathcal{X}\times \mathcal{Y},p_{XY})$ and define  \begin{align*}
\Theta \triangleq \mathbbm{1}\{(X^q,Y^q)\in \mathcal{T}^q_{2\epsilon}(XY) \} \mathbbm{1}\{Y^q\in \mathcal{T}^q_{\epsilon}(Y) \}. 
\end{align*}
Then, $\mathbb{P}[\Theta=1]\geq 1-(2\vert S_{X}\vert e^{-\epsilon^2q\mu_{X}/3}+2\vert S_{XY}\vert e^{-\epsilon^2q\mu_{XY}/3} )$, with $ S_{XY} \triangleq supp(  p_{XY } )$ and $ S_{Y } \triangleq supp(  p_{Y} )$. Moreover, if $y^q \in \mathcal{T}^q_{\epsilon}(Y)$, then
\begin{align*}
&H_{\infty}(X^q\vert Y^q =y^q , \Theta =1) \\&\geq q (1- \epsilon) H(X|Y) + \log( 1 - 2\vert S_{XY}\vert e^{-\epsilon^2q\mu_{XY}/6}).
\end{align*}
\end{lemma}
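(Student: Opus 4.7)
The plan is to establish the two claims of Lemma \ref{refversion} separately, both via standard letter-typicality arguments as developed in \cite{r4, orlitsky2001coding}.

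For the first assertion, the probability bound on $\Theta$, I would apply the union bound to write
\begin{align*}
\mathbb{P}[\Theta = 0] \leq \mathbb{P}[(X^q,Y^q) \notin \mathcal{T}^q_{2\epsilon}(XY)] + \mathbb{P}[Y^q \notin \mathcal{T}^q_\epsilon(Y)],
\end{align*}
and control each term via the classical letter-typicality deviation inequality. Since the blocks are i.i.d., a Chernoff/Bernstein bound applied to the empirical frequencies of each pair $(x,y) \in S_{XY}$ (respectively each $y \in S_Y$) yields a deviation bound of the form $2|S_{XY}| e^{-\epsilon^2 q \mu_{XY}/3}$ (respectively $2|S_Y|e^{-\epsilon^2 q \mu_Y/3}$); summing the two recovers the stated expression.

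For the second assertion, the min-entropy bound, I would fix $y^q \in \mathcal{T}^q_\epsilon(Y)$ and any $x^q$ with $(x^q,y^q) \in \mathcal{T}^q_{2\epsilon}(XY)$; outside this joint-typical set the conditional probability $p_{X^q|Y^q,\Theta}(x^q|y^q,1)$ vanishes and can be ignored. The i.i.d.\ factorization gives
\begin{align*}
-\log p(x^q|y^q) = \sum_{(x,y)\in S_{XY}} N(x,y|x^q,y^q)\,\bigl(-\log p(x|y)\bigr),
\end{align*}
where $N(x,y|x^q,y^q)$ is the number of occurrences of the pair $(x,y)$ in $(x^q,y^q)$; the letter-typicality hypothesis on $(x^q,y^q)$ forces each empirical frequency $N(x,y|x^q,y^q)/q$ to lie within a multiplicative $\epsilon$-window of $p(x,y)$, which translates into the per-realization bound $p(x^q|y^q) \leq 2^{-q(1-\epsilon)H(X|Y)}$. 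I would then combine this with
\begin{align*}
p(x^q|y^q, \Theta = 1) = \frac{p(x^q|y^q)\,\mathbbm{1}\{(x^q,y^q)\in\mathcal{T}^q_{2\epsilon}(XY)\}}{\mathbb{P}[\Theta = 1 | Y^q = y^q]},
\end{align*}
and note that, because $y^q \in \mathcal{T}^q_\epsilon(Y)$, the denominator collapses to $\mathbb{P}[(X^q,y^q)\in \mathcal{T}^q_{2\epsilon}(XY) | Y^q=y^q]$, which a conditional-typicality estimate lower-bounds by $1 - 2|S_{XY}|e^{-\epsilon^2 q \mu_{XY}/6}$. Taking the negative logarithm and maximizing over $x^q$ yields the claimed min-entropy inequality.

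The main obstacle, and really the only delicate point, is matching the specific constants in the exponents ($\mu_X$ versus $\mu_{XY}$, the factors $1/3$ versus $1/6$, and the slack factor $(1-\epsilon)$ in front of $H(X|Y)$) to the precise parameterization of the letter-typicality sets $\mathcal{T}^q_\epsilon$ in use; different conventions shuffle constants between the deviation bound, the per-realization estimate, and the conditional-typicality bound. Once the convention is pinned down, both parts reduce to invoking the standard chain of typicality estimates compiled in \cite{maurer, r2}.
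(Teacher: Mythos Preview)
The paper does not prove this lemma; it is quoted directly from \cite{maurer} (see also \cite{r2}) and used as a black box. Your proposed argument---union bound plus Chernoff-type deviation for the first claim, and the per-realization conditional-probability bound combined with a conditional-typicality estimate on the denominator $\mathbb{P}[\Theta=1\mid Y^q=y^q]$ for the second---is precisely the standard route taken in those references, so there is nothing to compare against in the present paper. One small point worth tightening: since $(x^q,y^q)\in\mathcal{T}^q_{2\epsilon}(XY)$, the naive frequency bound gives a factor $(1-2\epsilon)$ in front of $H(X|Y)$ rather than $(1-\epsilon)$; recovering the sharper constant as stated requires using the additional hypothesis $y^q\in\mathcal{T}^q_\epsilon(Y)$ when expanding $-\log p(x^q\mid y^q)$, exactly the kind of constant-tracking you already flagged as the only delicate step.
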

Define for any $\mathcal{U} \in \mathbb{U}$, the random variables
\begin{align}
    \Theta_\mathcal{U}  &\triangleq  \mathbbm{1}\{(V^N,Y_\mathcal{U}^N) \in \mathcal{T}_{2\epsilon}^q(V^nY^n_\mathcal{U}) \}\mathbbm{1}\{Y_\mathcal{U}^N \in  \mathcal{T}_{\epsilon}^q(Y^n_\mathcal{U})\},\\
\Upsilon_\mathcal{U} &\triangleq \mathbbm{1} \{H_\infty(V^N\vert Y_\mathcal{U}^N=y_\mathcal{U}^N,\Theta_\mathcal{U}=1)\nonumber\\
&\phantom{---}-H_\infty(V^N\vert Y_\mathcal{U}^N=y_\mathcal{U}^N, M=m,\Theta_\mathcal{U}=1) \nonumber \\
&\phantom{---}\leq \log \vert \mathcal{M} \vert + \sqrt{N}\}. \label{equpsilon1}
\end{align}
For any $\mathcal{U} \in \mathbb{U}$, $\mathbb{P}[\Theta_\mathcal{U} =1 ] \geq 1-\delta^0_{\epsilon}(n,\mathcal{U})$, where $\delta^0_{\epsilon}(n,\mathcal{U}) \triangleq 2\vert S_{V^n}\vert e^{-\epsilon^2q\mu_{V^n}/3}+2\vert S_{V^nY_{\mathcal{U}}^n}\vert e^{-\epsilon^2q\mu_{V^nY_{\mathcal{U}}^n}/3} $ by Lemma~\ref{refversion} applied to the discrete memoryless source model $(\mathcal{V}^n\times\mathcal{Y}^n_\mathcal{U}, p_{V^nY^n_\mathcal{U}})$, and $\mathbb{P}[\Upsilon_\mathcal{U} =1 ] \geq 1-2^{-\sqrt{N}}$ by \cite[Lemma 10]{maurer}. 
Hence,
\begin{align}
\mathbb{P}[\Upsilon_\mathcal{U} =1 , \Theta_\mathcal{U}=1] \geq 1-\delta^0_{\epsilon}(n,\mathcal{U})-2^{-\sqrt{N}}. \label{eqprobupsilon}
\end{align}
Then, for any $\mathcal{U}\in \mathbb{U}$, we have
\begin{align}
H(S \vert U_d Y_\mathcal{U}^N M)
& \stackrel{(a)}\geq H(S \vert U_d Y^N_{\mathcal{U}} M \Theta_{\mathcal{U}} \Upsilon_{\mathcal{U}} ) \nonumber \\
&\geq\min_{\mathcal{U} \in \mathbb{U}} H(S \vert U_d Y_\mathcal{U}^N M \Theta_\mathcal{U} \Upsilon_\mathcal{U} ) \nonumber \\
&\geq  \min_{\mathcal{U} \in \mathbb{U}} \mathbb{P}[\Theta_\mathcal{U}=1, \Upsilon_\mathcal{U} =1] \nonumber\\&\hspace{1.2cm}\times H(S \vert U_d Y_\mathcal{U}^N M,\Theta_\mathcal{U}=1, \Upsilon_\mathcal{U} =1) \nonumber\\
&\geq \min_{\mathcal{U} \in \mathbb{U}} \mathbb{P}[\Theta_\mathcal{U}=1, \Upsilon_\mathcal{U} =1]\nonumber\\&\hspace{0.6cm} \times\min_{\mathcal{U} \in \mathbb{U}} H(S \vert U_d Y_\mathcal{U}^N M ,\Theta_\mathcal{U}=1, \Upsilon_\mathcal{U} =1 )\nonumber\\
&\stackrel{(b)}  \geq \left(1-\max_{\mathcal{U} \in \mathbb{U}}\delta^0_{\epsilon}(n,\mathcal{U})-2^{-\sqrt{N}}\right)\nonumber\\&  \hspace{0.5cm}\times\min_{\mathcal{U} \in \mathbb{U}}H(S \vert U_d Y_\mathcal{U}^N M,\Theta_\mathcal{U}=1, \Upsilon_\mathcal{U} =1 ),
\label{rl1}
\end{align}
where $(a)$ holds because conditioning reduces entropy and $(b)$ holds by \eqref{eqprobupsilon}.
 To lower bound $\min_{\mathcal{U} \in \mathbb{U}}H(S \vert  U_d Y_\mathcal{U}^N M,\Theta_\mathcal{U}=1, \Upsilon_\mathcal{U} =1)$ in (\ref{rl1}) with Lemma \ref{Ldec6_2}, we now lower bound
$\min_{\mathcal{U} \in \mathbb{U}}H_\infty(V^N \vert  Y_\mathcal{U}^N=y_\mathcal{U}^N,M=m, \Theta_\mathcal{U}=1, \Upsilon_\mathcal{U}=1)$. We have for any $\mathcal{U} \in \mathbb{U}$,
\begin{align}
& H_\infty(V^N\vert Y_\mathcal{U}^N=y_\mathcal{U}^N,M=m, \Theta_\mathcal{U}=1, \Upsilon_\mathcal{U}=1)  \nonumber\\
& \stackrel{(a)}\geq H_\infty(V^N\vert Y_\mathcal{U}^N=y_\mathcal{U}^N, \Theta_\mathcal{U}=1)-\log \vert \mathcal{M} \vert-\sqrt{N} \nonumber\\
& \stackrel{(b)}{\geq} q (1- \epsilon){H}(V^n\vert Y_\mathcal{U}^n) -\delta^1_{\epsilon}(q,n,\mathcal{U}) -N(\max_{\mathcal{A} \in \mathbb{A}}I(V;X\vert Y_\mathcal{A})\nonumber\\&\hspace{5.7cm}+ 6 \epsilon H(V))-\sqrt{N} \nonumber\\
& \stackrel{(c)}{\geq}  N [I(X;V\vert Y_\mathcal{U}) -\max_{\mathcal{A} \in \mathbb{A}}I(V;X\vert Y_\mathcal{A}) - \delta^2_{\epsilon}(q,n,\mathcal{U}) ] \nonumber \\
&\geq  N [ \min_{\mathcal{U} \in \mathbb{U}}I(X;V\vert Y_\mathcal{U}) \! -\max_{\mathcal{A} \in \mathbb{A}}I(V;X\vert Y_\mathcal{A}) \! - \max_{\mathcal{U} \in \mathbb{U}}\delta^2_{\epsilon}(q,n,\mathcal{U}) ] \nonumber \\
& \stackrel{(d)} = N[ \min_{\mathcal{A} \in \mathbb{A}}I(V;Y_\mathcal{A})-\max_{\mathcal{U} \in \mathbb{U}}I(V;Y_\mathcal{U}) - \max_{\mathcal{U} \in \mathbb{U}}\delta^2_{\epsilon}(q,n,\mathcal{U})],
\label{rl7}
\end{align}
where $(a)$ holds by \eqref{equpsilon1}, $(b)$ holds by Lemma \ref{refversion} with $\delta^1_{\epsilon}(q,n,\mathcal{U}) \triangleq - \log( 1 - 2\vert S_{V^nY_{\mathcal{U}}^n}\vert e^{-\epsilon^2q\mu_{V^nY_{\mathcal{U}}^n}/6})$, $(c)$ holds with  $\delta^2_{\epsilon}(q,n,\mathcal{U}) \triangleq  \epsilon  I(X;V\vert Y_\mathcal{U}) + (1-\epsilon) [ 2\epsilon  H(X\vert Y_\mathcal{U}V)+2n^{-1} + \log\vert\mathcal{X}\vert(4\vert \mathcal{V} \vert |\mathcal{X}| e
^{-n\epsilon^2\mu_{XV}} + 2\vert \mathcal{V} \vert |\mathcal{X}| \vert \mathcal{Y}_{\mathcal{U}}\vert e^{-\epsilon^2n\mu_{VXY_\mathcal{U}}/8})]  + N^{-1}\delta^1_{\epsilon}(q,n,\mathcal{U})+  6  \epsilon H(V) +N^{-1/2} $ because, as shown in Appendix \ref{App16}, we have 
\begin{align}
H({V^n\vert Y^n_\mathcal{U}}) 
\geq n(H(X\vert Y_\mathcal{U})-H(X\vert Y_\mathcal{U}V)(1+2\epsilon))\nonumber \\
-2-n\log\vert\mathcal{X}\vert(4\vert \mathcal{V} \vert |\mathcal{X}| e^{-n\epsilon^2\mu_{XV}}\nonumber \\+ 2\vert \mathcal{V} \vert |\mathcal{X}| \vert \mathcal{Y}_{\mathcal{U}}\vert e^{-\epsilon^2n\mu_{VXY_\mathcal{U}}/8}),
\label{rl6}
\end{align}
and $(d)$ holds because $V \mathdash X \mathdash (Y_{\mathcal{A} }, Y_{\mathcal{U} })$.

Next, we set the output size $k$ of the extractor to be less than the lower bound in (\ref{rl7}) by $\sqrt{N}$, i.e.,
\begin{align}
 k & \triangleq \lfloor N[ \min_{\mathcal{A} \in \mathbb{A}}I(V;Y_\mathcal{A})-\max_{\mathcal{U} \in \mathbb{U}}I(V;Y_\mathcal{U}) - \max_{\mathcal{U} \in \mathbb{U}}\delta^2_{\epsilon}(q,n,\mathcal{U})\nonumber\\& \phantom{--}- N^{-1/2}] \rfloor,
\label{rl9}
\end{align}
Finally, we have 
\begin{align}
\max_{\mathcal{U} \in \mathbb{U}}I(S;U_dY_\mathcal{U}^NM)&=H(S)-\min_{\mathcal{U} \in \mathbb{U}}H(S\vert U_dY_\mathcal{U}^NM) \nonumber \\
& \stackrel{(a)}{\leq}k-\left(1-\max_{\mathcal{U} \in \mathbb{U}}\delta^0_{\epsilon}(n,\mathcal{U})-2^{-\sqrt{N}}\right)\nonumber\\&\hspace{3.2cm}\times\left(k- \delta^\star(N)\right)\nonumber\\
&\stackrel{(b)} \leq  \delta_{\epsilon}^3(N), \label{nov11_14_1}
\end{align}
where $(a)$ holds by \eqref{rl1}, \eqref{rl7} (valid for any $\mathcal{U} \in \mathbb{U}$), \eqref{rl9}, and Lemma \ref{Ldec6_2} with $\delta^\star(N)\triangleq2^{-\sqrt{N}/\log N}\left(k+\sqrt{N}/\log N\right)$ and $(b)$ holds with $\delta_{\epsilon}^3(N)\triangleq \delta^\star(N)+\left(\max_{\mathcal{U} \in \mathbb{U}}\delta^0_{\epsilon}(n,\mathcal{U})+2^{-\sqrt{N}}\right)k$. 
\subsubsection*{Analysis of uniformity}
 Similar to (\ref{nov11_14_1}), we have 
\begin{align}
~~~{H(S)} &\geq \min_{\mathcal{U} \in \mathbb{U}}H(S \vert U_dY_\mathcal{U}^NM) \nonumber \\
&\geq k-\delta_{\epsilon}^3(N). 
\label{c1eeee}
\end{align}
\subsubsection*{Public communication rate}
The public communication rate corresponds to the rate of $M$ plus the rate of $U_d$, i.e.,
\begin{align*}
\lim_{N\to\infty} R_p
&=\max_{\mathcal{A}\in \mathbb{A}}I(X;V\vert Y_{\mathcal{A} })+6\epsilon H(V) .
\end{align*}
\subsubsection*{Achievable secret rate}
The secret rate $R_s\triangleq k/N$ satisfies 
\begin{align}
 R_s &\geq  \min_{\mathcal{A} \in \mathbb{A}}I(V;Y_\mathcal{A})-\max_{\mathcal{U} \in \mathbb{U}}I(V;Y_\mathcal{U}) - \max_{\mathcal{U} \in \mathbb{U}}\delta^2_{\epsilon}(q,n,\mathcal{U})\nonumber\\ & \hspace{4.8cm} - N^{-1/2} -N^{-1}.
\label{rl10}
\end{align}
\subsection{Continuous case} \label{seccont}
In this section, we extend the achievability result of Section \ref{secdis} for discrete random variables  to Gaussian random variables by means of quantization. Quantization also allows us to  show that the size of the shares linearly scales with the length of the secret.  The main issue with quantization is that it might lead to an underestimation of the information that unauthorized sets of participants may learn about the secret. We will, however, show that this issue can be overcome provided that the quantization is fine enough.

We now build upon Section \ref{secdis} to show that $(R_p,R_s) \in \mathcal{R}(p_{XY_{\mathcal{L}}}, \mathbb{A})$, where 
\begin{align}
R_p&=\frac{1}{2}\log\frac{\sigma^2_{X}}{\sigma^2_{X \vert V}}-\frac{1}{2}\log\frac{\sigma^2_{X}O_{\mathcal{A}^\star}+1}{\sigma^2_{X \vert V}O_{\mathcal{A}^\star}+1}, \label{eqf1} \\
R_s &= \frac{1}{2}\log\frac{ \sigma^2_{X} O_{\mathcal{A}^\star}+1}{\sigma^2_{X \vert V}O_{\mathcal{A}^\star}+1}-\frac{1}{2}\log\frac{ \sigma^2_{X} O_{\mathcal{U}^\star}+1}{\sigma^2_{X \vert V}O_{\mathcal{U}^\star}+1} . \label{eqf2}
\end{align}
We use the following lemma to extend Section \ref{secdis} to the continuous case by means of quantization.

\begin{lemma}[\!\!\cite{r5,r6,r7}]\label{33}   Let $X$ and $Y$ be two real-valued random variables with probability distribution $\mathbb{P}_X$ and $\mathbb{P}_{Y}$, respectively. Let $\mathcal{C}_{\Delta_1}=\{ C_i \}_{i \in \mathcal{I}}$, $\mathcal{D}_{\Delta_2}=\{ D_j \}_{j \in \mathcal{J}}$ be two partitions of the real line for $X$ and $Y$ such that for any $i \in \mathcal{I}$, $\mathbb{P}_X[C_i]=\Delta_1$, for any $j \in \mathcal{J}$, $\mathbb{P}_{Y}[D_j]=\Delta_2$, where $\Delta_1, \Delta_2 > 0$. Let $X_{\Delta_1}, Y_{{{\Delta_2}}}$ be the quantized version of $X, Y$ with respect to the partitions $\mathcal{C}_{\Delta_1}, \mathcal{D}_{\Delta_2}$, respectively. Then, we~have 
$$I(X,Y)=\lim_{\Delta_1, \Delta_2 \rightarrow 0} I(X_{\Delta_1}, Y_{{ {\Delta_2}}}).$$
\end{lemma}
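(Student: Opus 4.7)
The plan is to combine monotonicity of mutual information under partition refinement with the variational characterization of mutual information. The inequality $I(X_{\Delta_1}; Y_{\Delta_2}) \le I(X;Y)$ is immediate from data processing, since $X_{\Delta_1}$ and $Y_{\Delta_2}$ are deterministic functions of $X$ and $Y$, so I focus on proving the matching $\liminf$ bound as $\Delta_1, \Delta_2 \to 0$.

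I would begin by recalling the classical partition-based representation
\[
I(X;Y) = \sup_{\mathcal{P},\mathcal{Q}} \sum_{i,j} \mathbb{P}_{XY}(P_i \times Q_j)\,\log\frac{\mathbb{P}_{XY}(P_i \times Q_j)}{\mathbb{P}_X(P_i)\,\mathbb{P}_Y(Q_j)},
\]
where the supremum ranges over all finite measurable partitions $\mathcal{P}=\{P_i\}$, $\mathcal{Q}=\{Q_j\}$ of $\mathbb{R}$. Given $\epsilon > 0$, pick finite partitions $\mathcal{P}^\star, \mathcal{Q}^\star$ whose associated partition sum exceeds $I(X;Y) - \epsilon$. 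By inner regularity of Borel probability measures on $\mathbb{R}$, the cells of $\mathcal{P}^\star$ and $\mathcal{Q}^\star$ may be taken to be finite unions of intervals, so each has only finitely many endpoints.

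The core step is to construct, for each pair $(\Delta_1, \Delta_2)$, an approximating pair of partitions $(\widetilde{\mathcal{P}}^{(\Delta_1)}, \widetilde{\mathcal{Q}}^{(\Delta_2)})$ that (i) are coarsenings of $\mathcal{C}_{\Delta_1}$ and $\mathcal{D}_{\Delta_2}$, respectively, and (ii) approximate $(\mathcal{P}^\star, \mathcal{Q}^\star)$ in the sense that $\mathbb{P}_X(P_i^\star \triangle \widetilde{P}_i^{(\Delta_1)}) = O(\Delta_1)$ and symmetrically on the $Y$-side. This is possible because each cell of $\mathcal{C}_{\Delta_1}$ has $\mathbb{P}_X$-mass exactly $\Delta_1$, so only the cells straddling the finitely many endpoints of $\mathcal{P}^\star$ contribute to the discrepancy, and each such boundary-adjacent cell may be assigned to whichever side of the endpoint it predominantly lies on. Because $(\widetilde{\mathcal{P}}^{(\Delta_1)}, \widetilde{\mathcal{Q}}^{(\Delta_2)})$ is a coarsening of $(\mathcal{C}_{\Delta_1}, \mathcal{D}_{\Delta_2})$, monotonicity of mutual information under refinement gives that $I(X_{\Delta_1}; Y_{\Delta_2})$ is at least the partition sum evaluated at this approximating pair.

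It then remains to pass to the limit: applying continuity of $t \mapsto t \log t$ on $[0,1]$ term-by-term to the finitely many summands shows that the approximating partition sum converges to the one defined by $(\mathcal{P}^\star, \mathcal{Q}^\star)$ as $\Delta_1, \Delta_2 \to 0$. Combined with the previous step, this yields $\liminf_{\Delta_1, \Delta_2 \to 0} I(X_{\Delta_1}; Y_{\Delta_2}) \ge I(X;Y) - \epsilon$, and letting $\epsilon \to 0$ concludes. The main obstacle I anticipate is keeping the approximation uniform over all $(i,j)$ indices when some joint-cell masses are small, because $t \log t$ is not Lipschitz near zero; the resolution is to exploit that the reference partitions $\mathcal{P}^\star, \mathcal{Q}^\star$ are \emph{fixed} with finitely many cells, so only a fixed finite number of terms must be controlled and a direct $\epsilon/\delta$ estimate via continuity of $t \log t$ on a sufficiently small compact neighborhood of each limiting mass suffices.
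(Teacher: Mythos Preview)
The paper does not prove this lemma; it simply quotes it from the cited references \cite{r5,r6,r7} and uses it as a black box in the quantization argument of Section~\ref{seccont}. There is therefore no ``paper's own proof'' to compare against.

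Your proposal is the standard route to this classical fact and is essentially correct: the upper bound is data processing, and the lower bound comes from Dobrushin's variational formula $I(X;Y)=\sup_{\mathcal{P},\mathcal{Q}}\sum_{i,j}\mathbb{P}_{XY}(P_i\times Q_j)\log\frac{\mathbb{P}_{XY}(P_i\times Q_j)}{\mathbb{P}_X(P_i)\mathbb{P}_Y(Q_j)}$ together with monotonicity under refinement. Two small points deserve tightening. First, your boundary-cell counting argument tacitly assumes the cells $C_i$ of $\mathcal{C}_{\Delta_1}$ are intervals; the lemma as stated only says each cell has mass $\Delta_1$, so you should either add that hypothesis explicitly (it is the intended reading of ``quantized version'') or argue via the quantile transform. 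Second, the reduction of $\mathcal{P}^\star,\mathcal{Q}^\star$ to finite unions of intervals requires controlling the \emph{joint} masses $\mathbb{P}_{XY}(P_i\times Q_j)$, not just the marginals; this follows from the elementary bound $|\mathbb{P}_{XY}(A\times B)-\mathbb{P}_{XY}(A'\times B')|\le \mathbb{P}_X(A\triangle A')+\mathbb{P}_Y(B\triangle B')$, which you should state. With these two clarifications the argument is complete.
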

We first show that a quantization does not affect the security requirement \eqref{oct19_29_2}.
\begin{proposition}
A quantization of  $Y_\mathcal{U}^n$, $\mathcal{U} \in \mathbb{U}$, might lead to an underestimation of $I(S;M,Y^n_\mathcal{U})$. 
However, if the quantized version $Y^n_{{\mathcal{U}},{\Delta}}$ of $Y_\mathcal{U}^n$, $\mathcal{U} \in \mathbb{U}$, is fine enough, then for any $\delta>0$
\begin{align} \label{eqquantleakl}
    \max_{\mathcal{U} \in \mathbb{U}} I(S;MY_\mathcal{U}^n) \leq \max_{\mathcal{U} \in \mathbb{U}}I(S;M Y^n_{{\mathcal{U}},{\Delta}}) + \delta.
\end{align}
\end{proposition}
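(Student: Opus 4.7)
The underestimation claim is immediate from the data processing inequality: $Y^n_{\mathcal{U},\Delta}$ is a deterministic function of $Y^n_\mathcal{U}$, so $I(S;M,Y^n_{\mathcal{U},\Delta})\le I(S;M,Y^n_\mathcal{U})$ for every $\mathcal{U}\in\mathbb{U}$. The substantive task is therefore to control the gap from above as $\Delta$ shrinks.

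My plan for the upper bound is to reduce the claim to a convergence statement to which Lemma \ref{33} can be applied. I would first cancel the common $I(S;M)$ term via the chain rule to write
\begin{align*}
I(S;M,Y^n_\mathcal{U}) - I(S;M,Y^n_{\mathcal{U},\Delta}) = I(S;Y^n_\mathcal{U}\vert M) - I(S;Y^n_{\mathcal{U},\Delta}\vert M),
\end{align*}
and then re-expand each conditional term by the chain rule in the opposite direction,
\begin{align*}
I(S;Y^n_\mathcal{U}\vert M) &= I(S,M;Y^n_\mathcal{U}) - I(M;Y^n_\mathcal{U}),\\
I(S;Y^n_{\mathcal{U},\Delta}\vert M) &= I(S,M;Y^n_{\mathcal{U},\Delta}) - I(M;Y^n_{\mathcal{U},\Delta}).
\end{align*}
The point of this rewriting is that the quantized variable now only appears in unconditional mutual informations whose other argument, $(S,M)$ or $M$, is discrete and finite-valued.

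To each such unconditional mutual information I would apply Lemma \ref{33}, taking the trivial singleton partition on the discrete variable and a product quantization of side-length $\Delta$ on the vector $Y^n_\mathcal{U}\in\mathbb{R}^{n|\mathcal{U}|}$. This gives $I(S,M;Y^n_{\mathcal{U},\Delta})\to I(S,M;Y^n_\mathcal{U})$ and $I(M;Y^n_{\mathcal{U},\Delta})\to I(M;Y^n_\mathcal{U})$ as $\Delta\to 0$, and subtracting yields $I(S;Y^n_{\mathcal{U},\Delta}\vert M)\to I(S;Y^n_\mathcal{U}\vert M)$. Hence for each fixed $\mathcal{U}\in\mathbb{U}$ and each $\delta>0$ there is a threshold $\Delta_\mathcal{U}>0$ below which the gap displayed above is at most $\delta$.

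To upgrade this pointwise-in-$\mathcal{U}$ convergence to the uniform statement \eqref{eqquantleakl}, I would exploit that $\mathbb{U}$ is a finite collection: choosing any $\Delta\le\min_{\mathcal{U}\in\mathbb{U}}\Delta_\mathcal{U}$ makes the bound valid for every $\mathcal{U}$ simultaneously, and taking the maximum over $\mathcal{U}$ on both sides then produces \eqref{eqquantleakl}. The main obstacle I anticipate is a minor technical one rather than a conceptual one: Lemma \ref{33} as stated is for scalar variables with equiprobable cells, so I must justify its use when the other argument is already discrete with unequal probabilities (handled by trivial singleton cells) and when $Y^n_\mathcal{U}$ is vector-valued (handled by coordinatewise refinement, since the $\sigma$-algebra generated by such product partitions increases to the Borel $\sigma$-algebra of $\mathbb{R}^{n|\mathcal{U}|}$ and mutual information is continuous under such refinements).
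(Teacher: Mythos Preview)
Your proposal is correct and follows essentially the same approach as the paper: both invoke Lemma~\ref{33} to make $|I(S;MY^n_\mathcal{U})-I(S;MY^n_{\mathcal{U},\Delta})|$ small and then use the finiteness of $\mathbb{U}$ to pass to the maximum. The paper is slightly more direct---it applies Lemma~\ref{33} straight to $I(S;M,Y^n_\mathcal{U})$ via the triangle inequality without your intermediate chain-rule decomposition---but the substance and the technical caveat about extending Lemma~\ref{33} beyond scalar equiprobable cells are identical.
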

\begin{proof} 
For any $\delta > 0$, for any $\mathcal{U} \in \mathbb{U}$, we have
\begin{align}
I(S;MY_\mathcal{U}^n) 
& \leq \vert I(S;MY_\mathcal{U}^n)-I(S;M Y^n_{{\mathcal{U}},{\Delta}}) \vert + I(S;M Y^n_{{\mathcal{U}},{\Delta}}) \nonumber \\
& \leq \max_{\mathcal{U} \in \mathbb{U}} \vert I(S;MY_\mathcal{U}^n)-I(S;M Y^n_{{\mathcal{U}},{\Delta}}) \vert\nonumber\\&\hspace{4cm} + \max_{\mathcal{U} \in \mathbb{U}}  I(S;M Y^n_{{\mathcal{U}},{\Delta}}) \nonumber \\
&\leq  \delta + \max_{\mathcal{U} \in \mathbb{U}}  I(S;M Y^n_{{\mathcal{U}},{\Delta}}), \label{eqquantleak}
\end{align}
where the last inequality holds by Lemma \ref{33}, if the quantized version $Y^n_{{\mathcal{U}},{\Delta}}$ of $Y_\mathcal{U}^n$, $\mathcal{U} \in \mathbb{U}$, is fine enough. Since \eqref{eqquantleak} is valid for any $\mathcal{U} \in \mathbb{U}$, we obtain \eqref{eqquantleakl}.
\end{proof}
For $\mathcal{A} \in \mathbb{A}$ and $\mathcal{U} \in \mathbb{U}$, we quantize $X,Y_\mathcal{A},Y_\mathcal{U},~\text{and}~V$ as in Lemma~\ref{33} to form $X_{\Delta},Y_{\mathcal{A},\Delta},Y_{\mathcal{U},\Delta}$, and $V_{\Delta}$ such that $\Delta=l^{-1}$ and $\vert \mathcal{X}_{\Delta} \vert=\vert \mathcal{Y}_{\mathcal{A},\Delta} \vert=\vert \mathcal{Y}_{\mathcal{U},\Delta}\vert=\vert \mathcal{V}_{\Delta} \vert=l $ with $l >0$.  Next, we apply the proof for the discrete case to the random variables $X_{\Delta},Y_{\mathcal{A},\Delta},Y_{\mathcal{U},\Delta}$, $V_{\Delta}$. By Lemma \ref{33}, we can fix $l$ large enough such that, for any $\mathcal{A} \in \mathbb{A}$, $\vert I(V_{\Delta};Y_{\mathcal{A},\Delta})-I(V;Y_\mathcal{A}) \vert<\delta/2$, for any $\mathcal{U} \in \mathbb{U}$, $\vert I(V_{\Delta};Y_{\mathcal{U},\Delta})-I(V;Y_\mathcal{U}) \vert<\delta/2 $, such that~(\ref{rl10}) becomes
\begin{align*}
&R_s \geq  \min_{\mathcal{A} \in \mathbb{A}}I(V;Y_\mathcal{A})-\max_{\mathcal{U} \in \mathbb{U}}I(V;Y_\mathcal{U}) - \max_{\mathcal{U} \in \mathbb{U}}\delta^2_{\epsilon}(q,n,\mathcal{U})\\&\hspace{5cm}- N^{-1/2} -N^{-1} - \delta.
\end{align*}
Note that $\delta^2_{\epsilon}(q,n,\mathcal{U})$, $\mathcal{U} \in \mathbb{U}$, in the above equation hides the terms  $2\epsilon (1-\epsilon) H(X_{\Delta}\vert Y_{\mathcal{U},\Delta}V_{{\Delta}}) $ and $6 \epsilon H(V_{\Delta})$,  which do not go to zero as $l$ goes to infinity. Consequently, we choose $\epsilon=n^{-\alpha}$, where $\alpha \in [0, 1/2] \backslash \{0,1/2\} $, such that 
if we choose~$l$ large enough, then $n$ large enough, and finally $q$ large enough, then the asymptotic secret rate is as close as desired to 
\begin{align}\label{eqn2}
 \min_{\mathcal{A} \in \mathbb{A}}I(V;Y_\mathcal{A})-\max_{\mathcal{U} \in \mathbb{U}}I(V;Y_\mathcal{U}),
 \end{align}
$\delta_{\epsilon}^3(N)$ vanishes to zero in \eqref{nov11_14_1}, \eqref{c1eeee}, and the asymptotic public communication rate is as close as desired to
\begin{align}\label{eqn3}
\max_{\mathcal{A} \in \mathbb{A}}I(V;X|Y_\mathcal{A}) .
\end{align}
 By taking the auxiliary random variable  $V$ jointly Gaussian with $X$ in (\ref{eqn2}) and (\ref{eqn3}), we obtain~\eqref{eqf1} and \eqref{eqf2}, as shown in Appendix \ref{App_GA}. 
 
\begin{remark}
We observe that the size of the shares scales linearly with the secret size. First, note that the size of each share is the sum of the length of the public communication, i.e., $NR_p$ bits, and the length of $N$ quantized observations of a Gaussian random variable. Then, since we achieve the secret rate in \eqref{eqn2} by making the quantization parameter $l$ fixed when $N$ grows to infinity, we conclude that the size of the shares scales linearly with $N$, which is also the case for the length of the generated secret.
\end{remark} 
\section{Concluding Remarks}\label{sec:cm}
We studied {\color{black}information-theoretic} secret sharing from Gaussian correlated sources over a one-way rate-limited public channel and characterized its secret capacity, which provides a closed-form expression of the trade-off between public communication and the secret rate. By contrast with a traditional secret-sharing protocol, our setting does not require information-theoretically secure channels between the dealer and participants, and provides information-theoretic security during the distribution phase, where the dealer distributes shares of the secret to the participants. Moreover, we have shown that the size of the shares scales linearly with the size of the secret for any access structure. 
We also characterized the secret capacity for threshold access structures and showed that the secret capacity is, in general, not a monotone function of the threshold.

While explicit and low-complexity coding schemes have been proposed for {\color{black} information-theoretic secret sharing} that rely on discrete channel models \cite{chou2020,chou2018explicit} and discrete source models \cite{sultana2021}, developing low-complexity coding schemes that achieve the limits derived in this paper for Gaussian sources remains an open problem.
\appendices
\section{Derivation of \eqref{aug_291}, \eqref{aug_292}}\label{App_th1}
Let  $Z$ and $Z'$ be zero-mean jointly Gaussian and jointly non-singular random vectors with covariance matrices $\Sigma_Z$ and $\Sigma'_Z$, respectively. By \cite[Theorem 3.5.2]{gallager}, we have
\begin{align}
Z'= PZ+W, \label{dec7_1}
\end{align}
where $P \triangleq \Sigma_{Z'Z}\Sigma^{-1}_{Z}$ and $W$ is independent of $Z$ with covariance $\Sigma_{W}  \triangleq \Sigma_{Z'}-\Sigma_{Z'Z}\Sigma^{-1}_{Z}\Sigma^T_{Z'Z}$. 
Hence, by (\ref{dec7_1}), we have for any $\mathcal{S} \subseteq \mathcal{L}$
\begin{align}
Y_{\mathcal{S}}&=\Sigma_{Y_{\mathcal{S}}X}\sigma^{-2}_{X} X+W_{Y_\mathcal{S}}, \label{dec4_2}
\end{align}
where $\Sigma_{W_{Y_{\mathcal{S}}}}\triangleq\Sigma_{Y_\mathcal{S}}-\Sigma_{Y_{\mathcal{S}}X}\sigma^{-2}_X\Sigma^T_{Y_{\mathcal{S}}X}$. Then, we normalize~\eqref{dec4_2} as follows. By Cholesky decomposition, 
there exists an invertible matrix $B\in \mathbb{R}^{|\mathcal{S}|\times |\mathcal{S}|}$  such that $\Sigma_{W_{Y_{\mathcal{\mathcal{S}}}}}=BB^T$. Hence, 
\eqref{dec4_2} can be rewritten as 
\begin{align*}
Y'_{\mathcal{S}}=H_{\mathcal{S}}X+W'_{Y_{\mathcal{S}}},
\end{align*}
where $Y'_{\mathcal{S}} \triangleq B^{-1} Y_{\mathcal{S}}$, $H_{\mathcal{S}}=B^{-1}\Sigma_{Y_\mathcal{S}X}\sigma^{-2}_X$, and  $W'_{Y_{\mathcal{S}}}\sim\mathcal{N}(0,I_{|\mathcal{S}|})$.

\section{Proof of Theorem \ref{EX3}} \label{App_C}
To prove Theorem \ref{EX3}, we proceed as follows. For a threshold access structure $\mathbb{A}_t$, we first prove that there exist sets of authorized and unauthorized participants $\mathcal{A}_{t}^\star\in \argmin_{\mathcal{A}\in \mathbb{A}_{t}}H_\mathcal{A}^TH_\mathcal{A}$ and $\mathcal{U}_{t}^\star \in \argmax_{\mathcal{U}\in \mathbb{U}_{t}} H_\mathcal{U}^TH_\mathcal{U}$, respectively, such that  for any $t\in \llbracket 1 ,L-1 \rrbracket$,  $\mathcal{A}_{t}^\star\subset \mathcal{A}_{t+1}^\star$, $\mathcal{U}_{t}^\star\subset \mathcal{U}_{t+1}^\star$. 
Then, by Theorem \ref{theo1}, we remark that  $\mathcal{A}_{t}^\star$ and $\mathcal{U}_{t}^\star$ also correspond to the sets that appear in the expression of the secret capacity for the threshold access structure $\mathbb{A}_t$.  
Finally, using the monotonicity property (with respect to $t$) of the sets $(\mathcal{A}_{t}^\star)_{t \in \llbracket 1 , L \rrbracket}$ and $(\mathcal{U}_{t}^\star)_{t \in \llbracket 1 , L \rrbracket}$  and Theorem \ref{theo1}, we derive necessary and sufficient conditions to determine whether the secret capacity increases or decreases as the threshold $t$ increases.

We will need the following lemma.
\begin{lemma}\label{lem_ma_1}
Let $a, c \in \mathbb{R}_+$ and $R_p \in \mathbb{R}_+ $. The function $f_{a,c,R_p}$ is non-increasing
\begin{align*}
 f_{a,c, R_p}:   \mathbb{R}_+&\rightarrow \mathbb{R}\\
  y & \mapsto \frac{1}{2 }\log\frac{cy 2^{-2R_p}+c a (1-2^{-2R_p})+1}{cy+1}.
\end{align*}
\end{lemma}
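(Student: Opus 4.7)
The plan is to verify monotonicity by direct differentiation, treating $\alpha \triangleq 2^{-2R_p} \in (0,1]$ as a fixed constant. First I would rewrite $f_{a,c,R_p}(y) = \tfrac{1}{2}(\log N(y) - \log D(y))$ with $N(y) \triangleq c\alpha y + ca(1-\alpha) + 1$ and $D(y) \triangleq cy + 1$, both of which are strictly positive on $\mathbb{R}_+$. Then I would compute
\begin{align*}
f_{a,c,R_p}'(y) = \frac{1}{2\ln 2}\left(\frac{c\alpha}{N(y)} - \frac{c}{D(y)}\right) = \frac{c}{2\ln 2}\cdot\frac{\alpha D(y) - N(y)}{N(y)D(y)}.
\end{align*}

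The key algebraic step is to simplify the numerator $\alpha D(y) - N(y)$. Expanding gives $\alpha(cy+1) - c\alpha y - ca(1-\alpha) - 1 = -(1-\alpha)(1+ca)$. Since $R_p \geq 0$ implies $1 - \alpha \geq 0$, and since $a, c \geq 0$ implies $1 + ca > 0$, this quantity is non-positive. Combined with $N(y)D(y) > 0$ for $y \geq 0$, this yields $f_{a,c,R_p}'(y) \leq 0$ throughout $\mathbb{R}_+$, which establishes that $f_{a,c,R_p}$ is non-increasing.

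There is no real obstacle here; the statement reduces to a short calculus exercise once the constants are collected. The only subtlety worth flagging in the write-up is the boundary case $R_p = 0$ (equivalently $\alpha = 1$), in which case the numerator of $f_{a,c,R_p}'$ vanishes identically and the function is in fact constantly zero, consistent with non-increasingness. I would present the proof in a single short paragraph mirroring the style used for Lemma \ref{Ldec6_1} earlier in the paper, since that lemma was dispatched by essentially the same derivative argument.
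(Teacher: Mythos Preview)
Your proposal is correct and follows essentially the same approach as the paper: both arguments compute the derivative directly and observe that it equals $\frac{1}{2\ln 2}\frac{c(1+ca)(2^{-2R_p}-1)}{(cy+1)(cy\,2^{-2R_p}+ca(1-2^{-2R_p})+1)}\leq 0$. Your substitution $\alpha=2^{-2R_p}$ and the intermediate step $\alpha D(y)-N(y)=-(1-\alpha)(1+ca)$ are just an explicit unpacking of this same computation.
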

\begin{proof}
The derivative of $f_{a,c, R_p}$ at $y\in \mathbb{R}_+$ is $f'_{a,c, R_p}=\frac{1}{2\ln 2}\frac{c(1+ca)(2^{-2R_p}-1)}{(cy+1)\left(cy 2^{-2R_p}+c a (1-2^{-2R_p})+1\right)}\leq 0$.
\end{proof}
Using Lemma \ref{lem_ma_1}, we obtain the following result:
 \begin{lemma}\label{prep2}
One can find sets $(\mathcal{A}_t^\star)_{t\in \llbracket 1 ,L \rrbracket}$ and $(\mathcal{U}_t^\star)_{t\in \llbracket 1 ,L \rrbracket}$ such that for any $t\in \llbracket 1 ,L-1 \rrbracket$, we have $\mathcal{A}_{t}^\star\subset \mathcal{A}_{t+1}^\star$, $\mathcal{U}_{t}^\star\subset \mathcal{U}_{t+1}^\star$, and for any $t\in \llbracket 1 ,L \rrbracket$,
\begin{align}
\{\mathcal{A}_{t}^\star,\mathcal{U}_{t}^\star\}\in & \argmin_{{{\mathcal{A} \in \mathbb{A}_{t}},{\mathcal{U}\in \mathbb{U}_{t}}}}\left[ f_{H_\mathcal{A}^T H_\mathcal{A},\sigma^2_X,R_p} (H^T_\mathcal{U} H_\mathcal{U})\right]^+, \label{eqargl}
\end{align}
where we have used the notation of Lemma \ref{lem_ma_1}.
\end{lemma}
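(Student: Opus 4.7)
The plan is to exhibit explicit sets $\{\mathcal{A}_t^\star, \mathcal{U}_t^\star\}_{t \in \llbracket 1, L\rrbracket}$ obtained by sorting the squared magnitudes $(H_\mathcal{L}(s)^2)_{s \in \mathcal{L}}$, verify the nesting property by inspection, and then show that they solve the argmin in \eqref{eqargl} by decoupling the joint minimization into two independent scalar optimizations, using the monotonicity of $f_{a,c,R_p}$ in \emph{each} of its parameters.

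First, I would relabel the indices of $\mathcal{L}$, without loss of generality, so that $H_\mathcal{L}(1)^2 \leq H_\mathcal{L}(2)^2 \leq \cdots \leq H_\mathcal{L}(L)^2$, and set $\mathcal{A}_t^\star \triangleq \{1,\ldots,t\}$ together with $\mathcal{U}_t^\star \triangleq \{L-t+2,\ldots,L\}$ for $t\in\llbracket 2,L\rrbracket$ and $\mathcal{U}_1^\star \triangleq \varnothing$. Since $H_\mathcal{S}^T H_\mathcal{S} = \sum_{s\in\mathcal{S}} H_\mathcal{L}(s)^2$ under the assumption $H_\mathcal{S}=(H_\mathcal{L}(s))_{s\in\mathcal{S}}$, and since all summands are non-negative, the minimum of $H_\mathcal{A}^T H_\mathcal{A}$ over $\mathbb{A}_t=\{\mathcal{A}\subseteq\mathcal{L}:|\mathcal{A}|\geq t\}$ is attained at the cardinality-$t$ set $\mathcal{A}_t^\star$ consisting of the $t$ smallest entries, and symmetrically the maximum of $H_\mathcal{U}^T H_\mathcal{U}$ over $\mathbb{U}_t=\{\mathcal{U}\subseteq\mathcal{L}:|\mathcal{U}|<t\}$ is attained at the cardinality-$(t-1)$ set $\mathcal{U}_t^\star$ consisting of the $t-1$ largest entries. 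The nestings $\mathcal{A}_t^\star \subset \mathcal{A}_{t+1}^\star$ and $\mathcal{U}_t^\star \subset \mathcal{U}_{t+1}^\star$ are then immediate from the construction.

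To conclude that $(\mathcal{A}_t^\star, \mathcal{U}_t^\star)$ jointly minimizes the objective in \eqref{eqargl}, I would view $(a,y)\mapsto f_{a,\sigma^2_X,R_p}(y)$ as a function of two non-negative real variables. Lemma \ref{lem_ma_1} already yields that it is non-increasing in $y$ for each fixed $a$. For the dependence on $a$, a short derivative computation gives
\[
\frac{\partial}{\partial a}f_{a,\sigma^2_X,R_p}(y) = \frac{1}{2\ln 2}\cdot \frac{\sigma^2_X(1-2^{-2R_p})}{\sigma^2_X y\, 2^{-2R_p} + \sigma^2_X a(1-2^{-2R_p})+1} \geq 0,
\]
since $R_p\geq 0$, so $f_{a,\sigma^2_X,R_p}(y)$ is non-decreasing in $a$ for each fixed $y$. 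Because $[\cdot]^+$ is non-decreasing, the objective in \eqref{eqargl} is non-decreasing in $a=H_\mathcal{A}^T H_\mathcal{A}$ and non-increasing in $y=H_\mathcal{U}^T H_\mathcal{U}$; hence a joint minimizer over the product set $\mathbb{A}_t \times \mathbb{U}_t$ is obtained by independently minimizing $H_\mathcal{A}^T H_\mathcal{A}$ and maximizing $H_\mathcal{U}^T H_\mathcal{U}$, which is precisely what $(\mathcal{A}_t^\star, \mathcal{U}_t^\star)$ does.

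The only subtlety is that the argmin of $[f]^+$ may strictly contain that of $f$ itself when the infimum of $f$ over the feasible set is negative, but any joint minimizer of $f$ is still an element of this larger argmin because $[\cdot]^+$ is monotone, so our pair lies in the argmin in \eqref{eqargl} regardless. The main (small) step in the proof is therefore the derivative computation establishing monotonicity of $f_{a,c,R_p}$ in its first parameter $a$, since Lemma \ref{lem_ma_1} only controls the $y$-dependence, and this full two-variable monotonicity is what allows the joint problem to be decoupled.
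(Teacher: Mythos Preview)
Your proof is correct and follows essentially the same route as the paper: relabel the participants so that $|H_\mathcal{L}(1)|\le\cdots\le|H_\mathcal{L}(L)|$, take $\mathcal{A}_t^\star=\llbracket 1,t\rrbracket$ and $\mathcal{U}_t^\star=\llbracket L-t+2,L\rrbracket$, and decouple the joint argmin via the monotonicity of $f_{a,\sigma_X^2,R_p}(y)$ in each variable. Your write-up is in fact slightly more careful, since you explicitly verify monotonicity in $a$ by a derivative computation (the paper attributes both monotonicities to Lemma~\ref{lem_ma_1}, which strictly speaking only covers the $y$-variable) and you address the $[\cdot]^+$ subtlety.
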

\begin{proof}

For $t \in \llbracket 1 , L \rrbracket$, remark that 
\begin{align}
  \argmin_{{{\mathcal{A} \in \mathbb{A}_{t}},{\mathcal{U}\in \mathbb{U}_{t}}}}\left[ f_{H_\mathcal{A}^T H_\mathcal{A},\sigma^2_X,R_p} (H^T_\mathcal{U} H_\mathcal{U})\right]^+\nonumber\\=\left\{\argmin_{\mathcal{A} \in \mathbb{A}_{t}} H_\mathcal{A}^T H_\mathcal{A}, \argmax_{\mathcal{U} \in \mathbb{U}_{t}} H^T_\mathcal{U} H_\mathcal{U}\right\}, \label{eqargs}
\end{align}
  because $f_{H_\mathcal{A}^T H_\mathcal{A},\sigma^2_X,R_p} (H^T_\mathcal{U} H_\mathcal{U})$ is an increasing function of $H_\mathcal{A}^T H_\mathcal{A}$ and is a decreasing function of $H_\mathcal{U}^T H_\mathcal{U}$ by Lemma \ref{lem_ma_1}. Next, write the vector $H_{\mathcal{L}}$ as  $H_{\mathcal{L}}= [H_{\mathcal{L}}(1), H_{\mathcal{L}}(2), \dots , H_{\mathcal{L}}(L)]^T$. By relabelling the participants, if necessary, assume that $\vert H_{\mathcal{L}}(1)\vert \leq \vert H_{\mathcal{L}}(2)\vert \leq \dots \leq \vert H_{\mathcal{L}}(L)\vert$. For $t \in \llbracket 1 , L \rrbracket$, choose $\mathcal{A}_{t}^\star \triangleq \llbracket 1, t \rrbracket$ and $\mathcal{U}_{t}^\star \triangleq \llbracket L-t+2, L \rrbracket$. Clearly, for any $t\in \llbracket 1 ,L-1 \rrbracket$, we have $\mathcal{A}_{t}^\star\subset \mathcal{A}_{t+1}^\star$, $\mathcal{U}_{t}^\star\subset \mathcal{U}_{t+1}^\star$, and by \eqref{eqargs}, we have that \eqref{eqargl} holds for any $t\in \llbracket 1 ,L \rrbracket$.
\end{proof}

By Theorem \ref{theo1} and \eqref{eqargl}, we have 
\begin{align}
  C_s(\mathbb{A}_{1}, R_p)=\Bigg[\frac{1}{2}\log  \left( \sigma^2_X H_{\mathcal{A}_1^\star}^T H_{\mathcal{A}_1^\star}  (1-2^{-2R_p})+1 \right)\Bigg]^+, \label{eqtht1}
\end{align}
and for $t \in \llbracket 2, L \rrbracket$,  we have
\begin{align}
 & C_s(\mathbb{A}_{t}, R_p)=\nonumber\\&\Bigg[\frac{1}{2}\log\frac{\sigma^2_XH^T_{\mathcal{U}_{t}^\star} H_{\mathcal{U}_{t}^\star} 2^{-2R_p}+\sigma^2_X H_{\mathcal{A}_{t}^\star}^T H_{\mathcal{A}_{t}^\star}  (1-2^{-2R_p})+1}{\sigma^2_XH^T_{\mathcal{U}_{t}^\star} H_{\mathcal{U}_{t}^\star}+1}\Bigg]^+\!\!\!\!.  \label{eqtht}
\end{align}\\
Using \eqref{eqtht1} and \eqref{eqtht}, we easily obtain for any $t\in \llbracket 1, L \rrbracket$
\begin{align*}
&C_s(\mathbb{A}_{1},R_p)\geq C_s(\mathbb{A}_{t},R_p)\\
 &    \iff \sigma^2_X H_{\mathcal{A}_{1}^\star}^T H_{\mathcal{A}_{1}^\star} H^T_{\mathcal{U}_{t}^\star} H_{\mathcal{U}_{t}^\star}+H_{\mathcal{A}_{1}^\star}^T H_{\mathcal{A}_{1}^\star}+H^T_{\mathcal{U}_{t}^\star} H_{\mathcal{U}_{t}^\star}\nonumber\\&\hspace{5.2cm}-H_{\mathcal{A}_{t}^\star}^T H_{\mathcal{A}_{t}^\star}\geq 0.
 \end{align*} 
 From the proof of  Lemma \ref{prep2}, there exists $O\geq 0$ such that $O\leq H^T_{\mathcal{U}_{t}^\star} H_{\mathcal{U}_{t}^\star}$
and $ H_{\mathcal{A}_{1}^\star}^T H_{\mathcal{A}_{1}^\star}+O= H_{\mathcal{A}_{t}^\star}^T H_{\mathcal{A}_{t}^\star}$. Therefore,
$H_{\mathcal{A}_{1}^\star}^T H_{\mathcal{A}_{1}^\star}+H^T_{\mathcal{U}_{t}^\star} H_{\mathcal{U}_{t}^\star}\geq H_{\mathcal{A}_{t}^\star}^T H_{\mathcal{A}_{t}^\star}$, and $C_s(\mathbb{A}_{1},R_p)\geq C_s(\mathbb{A}_{t},R_p)$.

Next, we have for $i \in \llbracket 1 , L-t \rrbracket $,
 \begin{align*}
& C_s(\mathbb{A}_{t},R_p)\geq C_s(\mathbb{A}_{t+i},R_p) \nonumber \\ \nonumber
& \iff   \sigma^2_X H^T_{\mathcal{A}_{t}^\star}H_{\mathcal{A}_{t}^\star} H^T_{\mathcal{U}_{t+i}^\star}H_{\mathcal{U}_{t+i}^\star}+H^T_{\mathcal{A}_{t}^\star}H_{\mathcal{A}_{t}^\star}+H^T_{\mathcal{U}_{t+i}^\star}H_{\mathcal{U}_{t+i}^\star}\\
& \phantom{---}\geq \sigma^2_X H^T_{\mathcal{U}_{t}^\star}H_{\mathcal{U}_{t}^\star} H^T_{\mathcal{A}_{t+i}^\star}H_{\mathcal{A}_{t+i}^\star}\!+ H^T_{\mathcal{U}_{t}^\star}H_{\mathcal{U}_{t}^\star}\!+\! H^T_{\mathcal{A}_{t+i}^\star}H_{\mathcal{A}_{t+i}^\star}\\
& \iff \sigma^2_XH^T_{\mathcal{A}_{t}^\star}H_{\mathcal{A}_{t}^\star} (H^T_{\mathcal{U}_{t+i}^\star}H_{\mathcal{U}_{t+i}^\star} - H^T_{\mathcal{U}_{t}^\star}H_{\mathcal{U}_{t}^\star} +H^T_{\mathcal{U}_{t}^\star}H_{\mathcal{U}_{t}^\star} )\\&\hspace{5cm}+ H^T_{\mathcal{U}_{t+i}^\star}H_{\mathcal{U}_{t+i}^\star} - H^T_{\mathcal{U}_{t}^\star}H_{\mathcal{U}_{t}^\star} \\
& \phantom{---}\geq \sigma^2_X H^T_{\mathcal{U}_{t}^\star}H_{\mathcal{U}_{t}^\star} (H^T_{\mathcal{A}_{t+i}^\star}H_{\mathcal{A}_{t+i}^\star} - H^T_{\mathcal{A}_{t }^\star}H_{\mathcal{A}_{t }^\star} + H^T_{\mathcal{A}_{t }^\star}H_{\mathcal{A}_{t }^\star} )\\&\hspace{4.9cm}+ H^T_{\mathcal{A}_{t+i}^\star}H_{\mathcal{A}_{t+i}^\star} - H^T_{\mathcal{A}_{t }^\star}H_{\mathcal{A}_{t }^\star}\\
& \iff (1+\sigma^2_XH^T_{\mathcal{A}_{t}^\star}H_{\mathcal{A}_{t}^\star}) (H^T_{\mathcal{U}_{t+i}^\star}H_{\mathcal{U}_{t+i}^\star} - H^T_{\mathcal{U}_{t}^\star}H_{\mathcal{U}_{t}^\star} ) \\
& \phantom{---}\geq (1+\sigma^2_X H^T_{\mathcal{U}_{t}^\star}H_{\mathcal{U}_{t}^\star}) (H^T_{\mathcal{A}_{t+i}^\star}H_{\mathcal{A}_{t+i}^\star} - H^T_{\mathcal{A}_{t }^\star}H_{\mathcal{A}_{t }^\star} ),
 \end{align*}
 where the first equivalence is obtained using \eqref{eqtht}. 
Note that, by Lemma \ref{prep2}, one can choose $\mathcal{A}_{t}^\star\subset\mathcal{A}_{t+i}^\star$ and  $\mathcal{U}_{t}^\star\subset\mathcal{U}_{t+i}^\star$, hence, $H^T_{\mathcal{A}_{t+i}^\star}H_{\mathcal{A}_{t+i}^\star} - H^T_{\mathcal{A}_{t }^\star}H_{\mathcal{A}_{t }^\star} \geq 0$ and $H^T_{\mathcal{U}_{t+i}^\star}H_{\mathcal{U}_{t+i}^\star} - H^T_{\mathcal{U}_{t}^\star}H_{\mathcal{U}_{t}^\star}  \geq~0$.

\section{Proof of \eqref{eqerror}} \label{Appeq10}
The probability of error averaged over $C_n$, i.e., $\mathbb{E}_{C_n}\left[\mathbb{P}[V^n\neq \widehat{V}^n_\mathcal{A}]\right]$ for any $\mathcal{A} \in \mathbb{A}$ can be upper bounded via the union bound by the four following terms: 
\begin{enumerate}
    \item The probability that  $(x^n, y_\mathcal{A}^n) \notin{\mathcal{T}_{\epsilon_1}^n(XY_\mathcal{A})}$, which is upper bounded by $2 \vert \mathcal{X}\vert\vert\mathcal{Y}_\mathcal{A}\vert\exp(-n\epsilon_1^2\mu_{XY_\mathcal{A}})$ \cite[Page~272 Equation (1.12)]{r4}.
    \item The probability that the encoder cannot find $(\omega,\nu)$ such that $(x^n,v^n(\omega,\nu))\in \mathcal{T}_{\epsilon}^n(XV)$, given that $(x^n, y_\mathcal{A}^n) \in{\mathcal{T}_{\epsilon_1}^n(XY_\mathcal{A})}$, which is upper bounded by 
    \begin{align*}
&\mathbb{E}_{C_n} \Bigg[\sum_{x^n,{y}_\mathcal{A}^n}p_{X^nY_\mathcal{A}^n}(x^n,y_\mathcal{A}^n){\mathbbm{1}}\{\forall(\omega,\nu), (v^n(\omega,\nu),x^n)\\&\hspace{1.7cm}\notin {\mathcal{T}_{\epsilon}^n(VX)}\text{ and}~(x^n,y_\mathcal{A}^n) \in {\mathcal{T}_{\epsilon_1}^n(XY_\mathcal{A})}\} \Bigg] \nonumber \\
&= \sum_{(x^n,{y}_\mathcal{A}^n) \in T^n_{\epsilon_1}(XY_\mathcal{A})}\!p_{X^nY_\mathcal{A}^n}(x^n,y_\mathcal{A}^n)\mathbb{P}[\forall (\omega,\nu),\\&\hspace{3.9cm}(V^n(\omega,\nu),x^n)\notin {\mathcal{T}_{\epsilon}^n(VX)}] \nonumber \\
&= \sum_{(x^n,{y}_\mathcal{A}^n) \in T^n_{\epsilon_1}(XY_\mathcal{A})}p_{X^nY_\mathcal{A}^n}(x^n,y_\mathcal{A}^n)(1\\&\hspace{1.5cm}-\mathbb{P}[(V^n(\omega,\nu),x^n)\in {\mathcal{T}_{\epsilon}^n(VX)}])^{2^{n(R_v+R'_v)}} \nonumber \\
& \stackrel{(a)}\leq \sum_{(x^n,{y}_\mathcal{A}^n) \in T^n_{\epsilon_1}(XY_\mathcal{A})}p_{X^nY_\mathcal{A}^n}(x^n,y_\mathcal{A}^n)\exp(\!-{2^{n(R_v+R'_v)}}\\&\hspace{3cm}\times \mathbb{P}[(V^n(\omega,\nu),x^n)\in {\mathcal{T}_{\epsilon}^n(VX)}])\nonumber \\
& \stackrel{(b)} \leq \sum_{(x^n,{y}_\mathcal{A}^n) \in T^n_{\epsilon_1}(XY_\mathcal{A})} \!\! p_{X^nY_\mathcal{A}^n}(x^n,y_\mathcal{A}^n)\exp\Big(-{2^{n(R_v+R'_v)}} \\&\hspace{2.2cm}\times\left(1-\delta_{\epsilon_1,\epsilon}^{(2)}(n)\right)2^{-n(I(V;X)+2\epsilon H(V))}\Big)\nonumber \\
&\leq \exp\Big(-\left(1-\delta_{\epsilon_1,\epsilon}^{(2)}(n)\right)2^{\epsilon nH(V)}\Big),\nonumber 
\end{align*}
where $(a)$ holds because for any $x\geq 0$ and any $p \in [0,1]$, $(1-p)^x \leq e^{-px}$, and in $(b)$ we have defined $\delta^{(2)}_{\epsilon_1,\epsilon}(n)\triangleq 2\vert\mathcal{V}\vert \vert\mathcal{X}\vert\exp\left(-n\frac{(\epsilon-\epsilon_1)^2}{1+\epsilon_1}\mu_{VX}\right)$.
\item The probability that the decoder finds $\tilde{\nu}_{\mathcal{A} } \neq \nu$ such that $(y_{\mathcal{A} }^n,v^n(\omega,\tilde{\nu}_{\mathcal{A} }))\in \mathcal{T}_{\epsilon}^n(Y_{\mathcal{A} }V)$, given that $(x^n, y_\mathcal{A}^n) \in{\mathcal{T}_{\epsilon_1}^n(XY_\mathcal{A})}$ and the encoder found $(\omega,\nu)$ such that $(x^n,v^n(\omega,\nu))\in \mathcal{T}_{\epsilon}^n(XV)$, which is upper bounded by 
\begin{align*}
&\sum_{\omega,\nu}p(\omega,\nu)\sum_{\nu'_\mathcal{A}\neq \nu}\mathbb{E}_{C_n} \sum_{(x^n,{y}_\mathcal{A}^n) \in T^n_{\epsilon_1}(XY_\mathcal{A})}p_{X^nY_\mathcal{A}^n}(x^n,y_\mathcal{A}^n)\\&\hspace{2.7cm}\times{\mathbbm{1}}\{y_\mathcal{A}^n,(v^n(\omega,\nu'_\mathcal{A}))\in {\mathcal{T}_{\epsilon}^n(Y_\mathcal{A}V)}\}  \nonumber \\
&=\sum_{\omega,\nu}p(\omega,\nu)\sum_{\nu'_\mathcal{A}\neq \nu}\sum_{(x^n,{y}_\mathcal{A}^n) \in T^n_{\epsilon_1}(XY_\mathcal{A})}p_{X^nY_\mathcal{A}^n}(x^n,y_\mathcal{A}^n)\\&\hspace{2.7cm}\times\mathbb{P}[(y_\mathcal{A}^n,(V^n(\omega,\nu'_\mathcal{A}))\in {\mathcal{T}_{\epsilon}^n(Y_\mathcal{A}V)}] \nonumber \\
&\leq\sum_{\omega,\nu}p(\omega,\nu)\sum_{\nu'_\mathcal{A}\neq \nu}\sum_{(x^n,{y}_\mathcal{A}^n) \in T^n_{\epsilon_1}(XY_\mathcal{A})}p_{X^nY_\mathcal{A}^n}(x^n,y_\mathcal{A}^n)\\&  \phantom{----------l--}\times2^{-n(I(V;Y_\mathcal{A})-2\epsilon H(V))} \nonumber  \displaybreak[0]\\
&\leq2^{n(R_v'-I(V;Y_\mathcal{A})+2\epsilon H(V))}\nonumber \displaybreak[0]\\
&\leq 2^{-n\epsilon H(V)}.\nonumber 
\end{align*}
\item The probability that the decoder cannot find $\tilde{\nu}_{\mathcal{A} } $ such that $(y_{\mathcal{A} }^n,v^n(\omega,\tilde{\nu}_{\mathcal{A} }))\in \mathcal{T}_{\epsilon}^n(Y_{\mathcal{A} }V)$, given that $(x^n, y_\mathcal{A}^n) \in{\mathcal{T}_{\epsilon_1}^n(XY_\mathcal{A})}$ and the encoder found $(\omega,\nu)$ such that $(x^n,v^n(\omega,\nu))\in \mathcal{T}_{\epsilon}^n(XV)$, which is upper bounded with Markov lemma \cite[Page 319  Equation (5.1)]{r4} by $2 \vert \mathcal{V}\vert\vert \mathcal{X}\vert\vert\mathcal{Y}_\mathcal{A}\vert\exp\left(-n\frac{(\epsilon-\epsilon_1)^2}{1+\epsilon_1}\mu_{VXY_\mathcal{A}}\right)$ .
\end{enumerate}

Hence, for any $\mathcal{A} \in \mathbb{A}$,  we have $\mathbb{E}_{C_n}[\mathbb{P}[V^n\neq \widehat{V}^n_\mathcal{A}]]\leq \delta(n,\epsilon,\mathcal{A})$. Next, we have
\begin{align}
\mathbb{E}_{C_n}\left[\max_{\mathcal{A} \in \mathbb{A}}\mathbb{P}[\widehat{V}_\mathcal{A}^n\neq V^n]\right]&\leq \mathbb{E}_{C_n} \left[  \sum_{\mathcal{A}\in \mathbb{A}} \mathbb{P}[\widehat{V}_\mathcal{A}^n \neq V^n] \right] \nonumber\\
&= \sum_{\mathcal{A} \in \mathbb{A}} \mathbb{E}_{C_n} \left[ \mathbb{P}[\widehat{V}_\mathcal{A}^n \neq V^n] \right] \nonumber\\
&\leq \sum_{\mathcal{A} \in \mathbb{A}} \delta(n,\epsilon,\mathcal{A})\nonumber\\
&\leq \vert\mathbb{A} \vert\max_{\mathcal{A} \in \mathbb{A}} \delta(n,\epsilon,\mathcal{A}).\nonumber
\end{align}
By Markov's inequality, we conclude that there exists a codebook such that $\max_{\mathcal{A} \in \mathbb{A}}\mathbb{P}[\widehat{V}_\mathcal{A}^n \neq V^n] \leq \vert\mathbb{A} \vert\max_{\mathcal{A} \in \mathbb{A}}\delta(n,\epsilon,\mathcal{A})$. 

\section{Proof of \eqref{rl6}} \label{App16}
For any $\mathcal{U}\in \mathbb{U}$, we have
\begin{align}
H(V^n\vert Y_\mathcal{U}^n)
& \stackrel{(a)}{\geq} I(X^n;V^n\vert Y_\mathcal{U}^n)\nonumber\\
&=H(X^n\vert Y_\mathcal{U}^n)-H(X^n\vert V^n Y_\mathcal{U}^n) \nonumber\\
& \stackrel{(b)}{=} nH(X\vert Y_\mathcal{U})-H(X^n\vert V^n Y_\mathcal{U}^n),
\label{rl3}
\end{align}
where $(a)$ holds by definition of mutual information, and $(b)$ holds because the $X_i$'s and $(Y_\mathcal{U})_i$'s are independently and identically distributed.
We now lower bound the term $-H(X^n\vert V^n Y_\mathcal{U}^n)$. Define for any $\mathcal{U} \in \mathbb{U}$,
\begin{align*}
\Gamma_{\mathcal{U}} & \triangleq 
\mathbbm{1}\{(X^n, V^n, Y_\mathcal{U}^n) \in \mathcal{T}_{2\epsilon}^n(XV Y_\mathcal{U})\} ,  \\
\Delta_{\mathcal{U}} &\triangleq
\mathbbm{1}\{ (X^n, V^n) \in \mathcal{T}_{\epsilon}^n(XV )\},
\end{align*}
so that,
\begin{align}
&H(X^n\vert V^n Y_\mathcal{U}^n) \nonumber \\
&\leq H(X^n\Gamma_{\mathcal{U}} \Delta_{\mathcal{U}}\vert V^n Y_\mathcal{U}^n) \nonumber\\
&= H(\Gamma_{\mathcal{U}} \Delta_{\mathcal{U}}\vert V^n Y_\mathcal{U}^n)+H(X^n\vert V^n Y_\mathcal{U}^n\Gamma_{\mathcal{U}} \Delta_{\mathcal{U}})\nonumber \\
& \stackrel{(a)}\leq 2+\smash{\sum_{\delta_\mathcal{U},\gamma_\mathcal{U} \in \{0,1\}}}\mathbb{P}(\Gamma_{\mathcal{U}}=\gamma_{\mathcal{U}}\vert\Delta_{\mathcal{U}}=\delta_{\mathcal{U}})\mathbb{P}(\Delta_{\mathcal{U}}=\delta_{\mathcal{U}})\nonumber\\&\hspace{3.1cm}\times H(X^n\vert V^n Y_\mathcal{U}^n,\Gamma_{\mathcal{U}}=\gamma_{\mathcal{U}}, \Delta_{\mathcal{U}}=\delta_{\mathcal{U}}) \nonumber\\
& \stackrel{(b)}\leq 2+H(X^n\vert V^n Y_\mathcal{U}^n,\Gamma_{\mathcal{U}}=1, \Delta_{\mathcal{U}}=1)\nonumber\\&\hspace{3.8cm}+(2\delta_{\epsilon}(n)+\delta^2_{\epsilon}(n, \mathcal{U}))\log\vert\mathcal{X}\vert^n \nonumber \\ \nonumber
&=\smash{\sum_{y_\mathcal{U}^n,v^n}}p(y_\mathcal{U}^n,v^n\vert {1,1})\\
&\hspace{1.4cm} \times H(X^n\vert  Y_\mathcal{U}^n=y_\mathcal{U}^n,V^n=v^n,\Gamma_{\mathcal{U}}=1\nonumber,\Delta_{\mathcal{U}}=1) \nonumber \\
& \hspace{3cm}+ 2 +(2\delta_{\epsilon}(n)+\delta^2_{\epsilon}(n, \mathcal{U}))\log\vert\mathcal{X}\vert^n\nonumber\\
& \stackrel{(c)} \leq \smash{\sum_{y_\mathcal{U}^n,v^n}}p(y_\mathcal{U}^n,v^n\vert {1,1})\log\lvert T^n_{2\epsilon}(X\vert y_\mathcal{U}^n,v^n)\rvert \nonumber\\&\hspace{3cm}+ 2 +(2\delta_{\epsilon}(n)+\delta^2_{\epsilon}(n, \mathcal{U}))\log\vert\mathcal{X}\vert^n \nonumber \displaybreak[0]\\
&\leq \smash{\sum_{y_\mathcal{U}^n,v^n}}p(y_\mathcal{U}^n,v^n\vert {1,1})nH(X\vert Y_\mathcal{U}V)(1+2\epsilon)  \nonumber\\&\hspace{3cm}+ 2+(2\delta_{\epsilon}(n) +\delta^2_{\epsilon}(n, \mathcal{U}))\log\vert\mathcal{X}\vert^n \nonumber \displaybreak[0]\\
&\leq nH(X\vert Y_\mathcal{U}V)(1+2\epsilon) \!+ 2 +(2\delta_{\epsilon}(n)\!+\delta^2_{\epsilon}(n, \mathcal{U}))\log\vert\mathcal{X}\vert^n.
\label{rl5}
\end{align}
where $(a)$ holds because $(\Gamma_{\mathcal{U}}, \Delta_{\mathcal{U}})$ is defined over an alphabet of cardinality equal to four so that $H(\Gamma_{\mathcal{U}} \Delta_{\mathcal{U}}\vert V^n Y_\mathcal{U}^n) \leq \log 4 =2$, $(b)$ holds because $\mathbb{P}[\Delta_\mathcal{U}=0] \leq \delta_{\epsilon}(n)\triangleq {\color{black}2 \vert \mathcal{X}\vert\vert\mathcal{V}\vert e^{-n\epsilon^2\mu_{XV}}}$ and $\mathbb{P}[\Gamma_\mathcal{U}=0\vert \Delta_\mathcal{U}=1] \leq \delta^2_{\epsilon}(n, \mathcal{U})\triangleq {\color{black}2\vert \mathcal{V}\vert\vert\mathcal{X}\vert\vert\mathcal{Y}_\mathcal{U}\vert e^{-\epsilon^2n\mu_{VXY_\mathcal{U}}/8}}$ 
by Markov Lemma \cite[Page~319  Equation (5.1)]{r4}, and $(c)$ holds because $H(X) \leq \log  |\mathcal{X}|$ for any discrete random variable $X$ defined over $ |\mathcal{X}|$. 
Combining~(\ref{rl3}) and (\ref{rl5}), we obtain \eqref{rl6}.
\section{Proof of \eqref{eqf1} and \eqref{eqf2}} \label{App_GA}
We rewrite (\ref{eqn2}) and (\ref{eqn3}) as
 \begin{align}\label{sep5_3}
 R_p&=\max_{\mathcal{A} \in \mathbb{A}}\left(h(X)-h(X\vert V)-h(Y_\mathcal{A})+h(Y_\mathcal{A}\vert V)\right),\\
 R_s&=\min_{\mathcal{A} \in \mathbb{A}} \min_{\mathcal{U} \in \mathbb{U}} \left(h(Y_\mathcal{A})-h(Y_\mathcal{A}\vert V)- h(Y_\mathcal{U})+h(Y_\mathcal{U}\vert V)\right).\label{sep5_4}
 \end{align}
Let $K_{XV}\triangleq\left[\begin{matrix}
\sigma^2_{X}&\sigma_{XV}\\
\sigma_{VX}&{\sigma^2_V}
\end{matrix}\right]$ be the covariance matrix of $(X,V)$. We have
\begin{align}
h(X\vert V)&=h(X,V)-h(V)\nonumber\\
&=\frac{1}{2}\log(2\pi e)^2\det( K_{XV})-\frac{1}{2}\log2\pi e{\sigma^2_V}\nonumber\\
& =\frac{1}{2}\log2\pi e ( \sigma^2_{X}-\sigma_{XV}\sigma^{-2}_V\sigma_{XV})\nonumber\\
& =\frac{1}{2}\log2\pi e \sigma^2_{X \vert V},\label{aug1}
\end{align}
where the last equality holds by \cite[Proposition 3.13]{eaton}. 
Next,  for any $\mathcal{A} \in \mathbb{A}$, let $K_{Y_\mathcal{A}V}\triangleq\left[\begin{matrix}
\Sigma_{Y_\mathcal{A}} &\Sigma_{Y_\mathcal{A}V}\\
\Sigma_{Y_\mathcal{A}V}^T&{\sigma^2_V}
\end{matrix}\right]$ be the covariance matrix of $(Y_\mathcal{A},V)$. We have  
\begin{align}
h(Y_\mathcal{A}\vert V)
&=\frac{1}{2}\log(2\pi e)^{\vert \mathcal{A}\vert}\frac{\det( K_{Y_\mathcal{A}V})}{{\sigma^2_V}}\nonumber\\
&\stackrel{(a)}{=}\frac{1}{2}\log(2\pi e)^{\vert \mathcal{A}\vert}\frac{{\sigma^2_V}\det( \Sigma_{Y_\mathcal{A}}-\Sigma_{Y_\mathcal{A}V}\sigma^{-2}_V\Sigma_{Y_\mathcal{A}V}^{T})}{{\sigma^2_V}}\nonumber\\
&\stackrel{(b)}{=}\frac{1}{2}\log(2\pi e)^{\vert \mathcal{A}\vert}\det( \Sigma_{Y_\mathcal{A}\vert V}),\nonumber\\
& \stackrel{(c)}{=}\frac{1}{2}\log(2\pi e)^{\vert \mathcal{A}\vert}\det( H_\mathcal{A}\sigma^2_{X \vert V}H_\mathcal{A}^T+I), \label{aug29_2} 
\end{align}
where $(a)$ holds by the formula for the determinant of a block matrix, $(b)$ holds by \cite[Proposition 3.13]{eaton}, $(c)$ holds by (\ref{aug_291}) and the definition of the conditional variance  $\Sigma_{Y_\mathcal{A} \vert V}\triangleq\mathbb{E}\left[\left(Y_\mathcal{A}-\mathbb{E}[Y_\mathcal{A}\vert V]\right)\left(Y_\mathcal{A}-\mathbb{E}[Y_\mathcal{A} \vert V]\right)^T\vert V\right]=H_\mathcal{A} \mathbb{E}\left[\left(X\!\!-\mathbb{E}[X\vert V]\right)\left(X\!\!-\mathbb{E}[X \vert V]\right)^T\vert V\right]\!H_\mathcal{A}^T +\mathbb{E}\left[W_{Y_\mathcal{A}} W^T_{Y_\mathcal{A}}\right]$ and $W_{Y_\mathcal{A}}$ is a Gaussian noise vector with  with identity covariance matrix. 
Similarly, for any $\mathcal{U} \in \mathbb{U}$, we have
\begin{align}
h(Y_\mathcal{U}\vert V)
& = \frac{1}{2}\log(2\pi e)^{\vert \mathcal{U}\vert}\det( H_\mathcal{U}\sigma^2_{X \vert V}H_\mathcal{U}^T+I). \label{sep9_3} 
\end{align}

Thus, from \eqref{sep5_3}, \eqref{sep5_4}, (\ref{aug1}), (\ref{aug29_2}), and (\ref{sep9_3}), we have
\begin{align}
&R_p=\max_{\mathcal{A} \in \mathbb{A}}\left(\frac{1}{2}\log{ \frac{\sigma^2_{X}}{ \sigma^2_{X \vert V}} }-\frac{1}{2}\log \frac{\det( H_\mathcal{A}\sigma^2_{X}H_\mathcal{A}^T+I)}{\det( H_\mathcal{A} \sigma^2_{X \vert V}H_\mathcal{A}^T+I)}\right), \label{aug29_3b} \\
&R_s =\min_{\mathcal{A} \in \mathbb{A}}\min_{\mathcal{U} \in \mathbb{U}}\Bigg(\frac{1}{2}\log \frac{\det( H_\mathcal{A}\sigma^2_{X}H^T_\mathcal{A}+I)}{\det( H_\mathcal{A}\sigma^2_{X \vert V}H^T_\mathcal{A}+I)}\nonumber\\&\hspace{2.9cm}-\frac{1}{2}\log \frac{\det( H_\mathcal{U}\sigma^2_{X}H^T_\mathcal{U}+I)}{\det( H_\mathcal{U}\sigma^2_{X \vert V}H^T_\mathcal{U}+I)}\Bigg).
\label{aug29_3}
\end{align}
Then, by Lemma \ref{Laug27_1} and the definition of $O_\mathcal{A}$, $\mathcal{A} \in \mathbb{A}$ and $O_\mathcal{U}$, $\mathcal{U} \in \mathbb{U}$, we can rewrite (\ref{aug29_3b}) and~\eqref{aug29_3} as 
\begin{align}
R_p&=\max_{\mathcal{A} \in \mathbb{A}}\left(\frac{1}{2}\log\frac{\sigma^2_{X}}{\sigma^2_{X \vert V}}-\frac{1}{2}\log\frac{\sigma^2_{X}O_\mathcal{A}+1}{\sigma^2_{X \vert V}O_\mathcal{A}+1}\right), \label{eqRp} \\
R_s &=  \min_{\mathcal{A} \in \mathbb{A}}\min_{\mathcal{U} \in \mathbb{U}}\left(\frac{1}{2}\log\frac{ \sigma^2_{X} O_\mathcal{A}+1}{\sigma^2_{X \vert V}O_\mathcal{A}+1}-\frac{1}{2}\log\frac{ \sigma^2_{X} O_\mathcal{U}+1}{\sigma^2_{X \vert V}O_\mathcal{U}+1}\right) . \label{eqRs}
\end{align}
Finally, by Lemma \ref{Ldec6_1}, \eqref{eqRp} and~\eqref{eqRs} become \eqref{eqf1} and \eqref{eqf2}.

\bibliographystyle{IEEEtran}
\bibliography{reff}

\end{document}